\providecommand{\tabularnewline}{\\}
 \definecolor{BLACK}{gray}{0}
 \definecolor{WHITE}{gray}{1}
 \definecolor{RED}{rgb}{1,0,0}
 \definecolor{GREEN}{rgb}{0,1,0}
 \definecolor{BLUE}{rgb}{0,0,1}
 \definecolor{CYAN}{cmyk}{1,0,0,0}
 \definecolor{MAGENTA}{cmyk}{0,1,0,0}
 \definecolor{YELLOW}{cmyk}{0,0,1,0}
\theoremstyle{plain}
\newtheorem{thm}{\protect\theoremname}[section]
  \theoremstyle{definition}
  \newtheorem{defn}[thm]{\protect\definitionname}
  \theoremstyle{plain}
  \newtheorem{criterion}[thm]{\protect\criterionname}
  \theoremstyle{plain}
  \newtheorem{cor}[thm]{\protect\corollaryname}
  \theoremstyle{definition}
  \newtheorem{example}[thm]{\protect\examplename}
\DeclareMathAlphabet{\mathcalligra}{T1}{calligra}{m}{n}
\DeclareFontShape{T1}{calligra}{m}{n}{<->s*[2.2]callig15}{}
\def\frontmatter@abstractheading{}
  \providecommand{\corollaryname}{Corollary}
  \providecommand{\criterionname}{Criterion}
  \providecommand{\definitionname}{Definition}
  \providecommand{\examplename}{Example}
\providecommand{\theoremname}{Theorem}
\begin{document}

\title{Selectivity in Probabilistic Causality:\\Where Psychology Runs Into
Quantum Physics}

\author{Ehtibar N. Dzhafarov}

\thanks{Corresponding author: Ehtibar Dzhafarov, Purdue University, Department
of Psychological Sciences, 703 Third Street West Lafayette, IN 47907,
USA. email: ehtibar@purdue.edu.}

\affiliation{Purdue University}

\author{Janne V. Kujala}

\affiliation{University of Jyväskylä}
\begin{abstract}
\mbox{}

Given a set of several inputs into a system (e.g., independent variables
characterizing stimuli) and a set of several stochastically non-independent
outputs (e.g., random variables describing different aspects of responses),
how can one determine, for each of the outputs, which of the inputs
it is influenced by? The problem has applications ranging from modeling
pairwise comparisons to reconstructing mental processing architectures
to conjoint testing. A necessary and sufficient condition for a given
pattern of selective influences is provided by the Joint Distribution
Criterion, according to which the problem of ``what influences what''
is equivalent to that of the existence of a joint distribution for
a certain set of random variables. For inputs and outputs with finite
sets of values this criterion translates into a test of consistency
of a certain system of linear equations and inequalities (Linear Feasibility
Test) which can be performed by means of linear programming. While
new in the behavioral context, both this test and the Joint Distribution
Criterion on which it is based have been previously proposed in quantum
physics, in dealing with generalizations of Bell inequalities for
the quantum entanglement problem. The parallels between this problem
and that of selective influences in behavioral sciences are established
by observing that noncommuting measurements in quantum physics are
mutually exclusive and can therefore be treated as different levels
of one and the same factor.

\mbox{}

\textsc{Keywords:} Bell-type inequalities, EPR paradigm, factorial
design, Fine's inequalities, joint distribution criterion, probabilistic
causality, mental architectures, random outputs, selective influences,
quantum entanglement, Thurstonian scaling.

\markboth{Dzhafarov and Kujala}{Selectivity in Probabilisitc Causality}
\end{abstract}
\maketitle

\section{Introduction\label{sec:Introduction}}

This paper deals with \emph{diagrams of selective influences}, like
this one: \emph{
\begin{equation}
\boxed{\begin{array}{c}
\xymatrix{\alpha\ar[d]\ar[drr] & \beta\ar[d]\ar[dl] & \gamma\ar[d]\ar[d] & \delta\ar[dl]\ar[dlll]\\
A & B & C
}
\end{array}}\label{diag:1}
\end{equation}
}

\protect{\noindent}The Greek letters in this diagram represent \emph{inputs},
or \emph{external factors}, e.g., parameters of stimuli whose values
can be chosen at will or observed and recorded. The capital Roman
letters stand for random outputs characterizing reactions of the system
(an observer, a group of observers, stock market, a set of photons,
etc.). The arrows show which factor influences which random output.
The factors are treated as \emph{deterministic} entities: even if
$\alpha,\beta,\gamma,\delta$ in reality vary randomly (e.g., being
randomly generated by a computer program, or being concomitant parameters
of observations, such as age of respondents), for the purposes of
analyzing selective influences the random outputs $A,B,C$ are always
viewed as \emph{conditioned} upon various combinations of specific
values of $\alpha,\beta,\gamma,\delta$. The first question to ask
is: what is the meaning of the above diagram if the random outputs
$A,B,C$ in it are not necessarily stochastically independent? (If
they are, the answer is of course trivial.) And once the meaning of
the diagram of selective influences is established, how can one determine
that this diagram correctly characterizes the dependence of the joint
distributions of the random outputs $A,B,C$ on the external factors
$\alpha,\beta,\gamma,\delta$? 

These questions are important, because the assumption of stochastic
independence of the outputs more often than not is either demonstrably
false or adopted for expediency alone, with no other justification.
At the same time the assumption of selectivity in causal relations
between inputs and stochastic outputs is ubiquitous in theoretical
modeling, often being built in the very language of the models. For
instance, in Thurstone's most general model of pairwise comparisons
(Thurstone, 1927) it is assumed that each of the two stimuli is mapped
into ``its'' internal representation, while the two representations
are stochastically interdependent random entities. In Dzhafarov (2003),
Dzhafarov and Gluhovsky (2006), and Kujala and Dzhafarov (2008) the
reader may find other motivating applications for the notion of selective
influences: same-different comparisons, conjoint testing, parallel-serial
networks of mental operations, response time decompositions, and all
conceivable combinations of regression analysis and factor analysis.
In this paper we add another motivating example, the quantum entanglement
problem in quantum physics. 

This paper continues and expands the analysis of selective influences
presented in Dzhafarov and Kujala (2010). The familiarity with it
can be helpful, but the main concepts, terminology, and notation are
recapitulated in Section \ref{sec:Basic-Notions}. Unlike in Dzhafarov
and Kujala (2010), however, here we do not pursue the goal of maximal
generality of formulations, focusing instead on the conceptual set-up
that applies to commonly encountered experimental designs. This means
a finite number of factors, each having a finite number of values.
It also means that the random outcomes influenced by these factors
are \emph{random variables} in the narrow sense of the word: their
values are vectors of real numbers or elements of countable sets,
rather than more complex structures, such as functions or sets. This
is done primarily to simplify and shorten exposition, and also because
the \emph{Linear Feasibility Test}, a new (for behavioral sciences)
application of the Joint Distribution Criterion on which we focus
in this paper (Section \ref{sec:Linear-Feasibility-Test}), is confined
to finite sets of finite-valued factors and finite-valued random variables.
This also allows us to emphasize a simple but important and previously
overlooked proposition, Theorem \ref{thm:In-Definition-,}, which
essentially says that, when dealing with observable random variables,
the unobservable random entities of the theory can also be assumed
to be random variables (in the narrow sense). In another respect,
however, the present treatment is more general than that in Dzhafarov
and Kujala (2010): we allow for \emph{incomplete designs}, those in
which some but not necessarily all combinations of the values of the
factors serve as allowable treatments. This modification is critical
for the possibility of representing any diagram of selective influences,
such as (\ref{diag:1}), in a \emph{canonical} \emph{form}, with every
random output being selectively influenced by one and only one factor. 

As it turns out, both the Linear Feasibility Test and the Joint Distribution
Criterion on which it is based have their analogues in quantum physics.%
\footnote{\label{fn:We-are-grateful}We are grateful to Jerome Busemeyer of
Indiana University who pointed out to us that the formulation of the
Joint Distribution Criterion in our earlier work has the same formal
structure as the identically titled criterion in Fine (1981a-b), in
his analysis of quantum entanglement.%
} To appreciate the analogy, however, one has to adopt the interpretation
of noncommuting quantum measurements performed on a given component
of a quantum-entangled system as mutually exclusive factor levels
of the same factor. In Sections \ref{sec:Quantum-Entanglement-and}
and \ref{sec:Linear-Feasibility-Test} we discuss the parallels between
the existence of a classical explanation for an entanglement situation
in quantum mechanics and the adherence of a behavioral experiment
to a diagram of selective influences. 

The term ``test'' in this paper is used in the meaning of necessary
(sometimes necessary and sufficient) conditions for diagrams of selective
influences. The usage is the same as when we speak of the tests for
convergence in calculus or for divisibility in arithmetic. That is,
the meaning of the term is non-statistical. We assume that random
outputs are known on the population level. General considerations
related to statistical tests based on our population level tests are
discussed in Section \ref{sub:Sample-level-tests}, but specific statistical
issues are outside the scope of this paper.

\section{Basic Notions\label{sec:Basic-Notions}}

In this section, we establish the terminology, notation, and recapitulate
basic facts related to factors, random variables, and the dependence
of the latter on the former. We follow Dzhafarov and Kujala (2010),
adding observations related to the factorial designs being incomplete
and random outputs being random variables in the narrow sense of the
term. At the end of the section we discuss the parallels between the
issue of selective influence in behavioral sciences and the quantum
entanglement problem.

\subsection{Factors, factor points, treatments}

A \emph{factor} $\alpha$ is treated as a set of \emph{factor points},
each of which has the format ``value (or level) $x$ of factor $\alpha$.''
In symbols, this can be presented as $\left(x,`\alpha\textnormal{'}\right)$,
where $`\alpha\textnormal{'}$ is the unique name of the set $\alpha$
rather than the set itself. It is convenient to write $x^{\alpha}$
in place of $\left(x,`\alpha\textnormal{'}\right)$. Thus, if a factor
with the name $`intensity\textnormal{'}$ has three levels, $`low,\textnormal{'}$
$`medium,\textnormal{'}$ and $`high,\textnormal{'}$ then this factor
is taken to be the set 
\[
intensity=\left\{ low^{intensity},medium^{intensity},high^{intensity}\right\} .
\]
There is no circularity here, for, say, the factor point $low^{intensity}$
stands for $\left(value=low,name=`intensity\textnormal{'}\right)$
rather than $\left(value=low,set=intensity\right)$.

We will deal with finite sets of factors $\Phi=\left\{ \alpha_{1},\ldots,\alpha_{m}\right\} $,
with each factor $\alpha\in\Phi$ consisting of a finite number of
factor points, 
\[
\alpha=\left\{ v_{1}^{\alpha},\ldots,v_{k_{\alpha}}^{\alpha}\right\} .
\]
Clearly, $\alpha\cap\beta=\varnothing$ for any distinct $\alpha,\beta\in\Phi$. 

A \emph{treatment}, as usual, is defined as the set of factor points
containing one factor point from each factor, 
\[
\phi=\left\{ x_{1}^{\alpha_{1}},\ldots,x_{m}^{\alpha_{m}}\right\} \in\alpha_{1}\times\ldots\times\alpha_{m}.
\]

The \emph{set of treatments} (used in an experiment or considered
in a theory) is denoted by $T\subset\alpha_{1}\times\ldots\times\alpha_{m}$
and assumed to be nonempty. Note that $T$ need not include all possible
combinations of factor points. This is an important consideration
in view of the ``canonical rearrangement'' described below. Also,
incompletely crossed designs occur broadly --- in an experiment because
the entire set $\alpha_{1}\times\ldots\times\alpha_{m}$ may be too
large, or in a theory because certain combinations of factor points
may be physically or logically impossible (e.g., contrast and shape
cannot be completely crossed if zero is one of the values for contrast).

\subsection{Random variables}

We assume the reader is familiar with the notion of a random entity
(random variable in the general sense of the term) $A$ associated
with an observation space $\left(\mathcal{A},\Sigma\right)$, where
$\mathcal{A}$ is the set of possible values for $A$, and $\Sigma$
a sigma-algebra (set of events) on $\mathcal{A}$. A \emph{random
variable} (in the narrow sense) is a special case of a random entity,
defined as follows: 

(i) if $\mathcal{A}$ is countable, $\Sigma$ is the power set of
$\mathcal{A}$, then $A$ is a random variable; 

(ii) if $\mathcal{A}$ is an interval of reals, $\Sigma$ is the Lebesgue
sigma-algebra on $\mathcal{A}$, then $A$ is a random variable; 

(iii) if $A_{1},\ldots,A_{n}$ are random variables, then any jointly
distributed vector $\left(A_{1},\ldots,A_{n}\right)$ whose observation
space is the conventionally understood product of the observations
spaces for $A_{1},\ldots,A_{n}$ is a random variable. 

We use the relational symbol $\sim$ in the meaning of ``is distributed
as.'' $A\sim B$ is well defined irrespective of whether $A$ and
$B$ are jointly distributed. 

Let, for each treatment $\phi\in T$, there be a vector of jointly
distributed random variables $A=(A_{1},\ldots,A_{n})$ with a fixed
(product) observation space and the probability measure $\mu_{\phi}$
that depends on $\phi$.%
\footnote{\label{fn:Invariance}The convenient assumption of the invariance
of the observation space for $A$ with respect to $\phi$ is innocuous:
one can always redefine the observation spaces for different treatments
$\phi$ to make them coincide. %
} Then we say that we have a \emph{vector of jointly distributed random
variables that depends on treatment} $\phi$, and write 
\[
A(\phi)=(A_{1},\ldots,A_{n})(\phi),\quad\phi\in T.
\]
A correct way of thinking of $A(\phi)$ is that it represents a \emph{set
of vectors of jointly distributed random }variables, each of these
vectors being labeled (indexed) by a particular treatment. Any subvector
of $A\left(\phi\right)$ should also be written with the argument
$\phi$, say, $(A_{1},A_{2},A_{3})\left(\phi\right)$. If $\phi$
is explicated as $\phi=\left\{ x_{1}^{\alpha_{1}},\ldots,x_{m}^{\alpha_{m}}\right\} $,
we write $A(\phi)=A(x_{1}^{\alpha_{1}},\ldots,x_{m}^{\alpha_{m}})$.

It is important to note that for distinct treatments $\phi_{1}$ and
$\phi_{2}$ the corresponding $A(\phi_{1})$ and $A(\phi_{2})$ \emph{do
not possess a joint distribution}, they are \emph{stochastically unrelated}.
This is easy to understand: since $\phi_{1}$ and $\phi_{2}$ are
mutually exclusive conditions for observing values of $A$, there
is no non-arbitrary way of choosing which value $a=(a_{1},\ldots,a_{n})$
observed at $\phi_{1}$ should be paired with which value $a'=(a'_{1},\ldots,a'_{n})$
observed at $\phi_{2}$. To consider $A(\phi_{1})$ and $A(\phi_{2})$
stochastically independent and to pair every possible value of $A(\phi_{1})$
with every possible value $A(\phi_{2})$ is as arbitrary as, say,
to consider them positively correlated and to pair every quantile
of $A(\phi_{1})$ with the corresponding quantile of $A(\phi_{2})$.

\subsection{Arrow diagrams, canonically (re)arranged}

Given a set of factors $\Phi=\left\{ \alpha_{1},\ldots,\alpha_{m}\right\} $
and a vector $A(\phi)=(A_{1},\ldots,A_{n})(\phi)$ of random variables
depending on treatment, an \emph{arrow diagram} is a mapping
\begin{equation}
M:\left\{ 1,\ldots,n\right\} \rightarrow2^{\Phi}\label{eq:DSI}
\end{equation}
($2^{\Phi}$ being the set of subsets of $\Phi$). Later, in Definition
\ref{def:(SI,-bijective)}, the arrows will be interpreted as indicating
selective influences, but for now this is unimportant. The set 
\[
\Phi_{i}=M\left(i\right),\;(i=1,\ldots,n),
\]
is referred to as the subset of factors \emph{corresponding to} $A_{i}$.
It determines, for any treatment $\phi\in T$, the subtreatments $\phi_{\Phi_{i}}$
defined as
\[
\phi_{\Phi_{i}}=\left\{ x^{\alpha}\in\phi:\alpha\in\Phi_{i}\right\} ,\quad i=1,\ldots,n.
\]
Subtreatments $\phi_{\Phi_{i}}$ across all $\phi\in T$ can be viewed
as \emph{admissible} \emph{values} of the subset of factors $\Phi_{i}$
($i=1,\ldots,n$). Note that $\phi_{\Phi_{i}}$ is empty whenever
$\Phi_{i}$ is empty. 

The simplest arrow diagram is \emph{bijective}, with correspondences
\begin{equation}
\boxed{\xymatrix{\alpha_{1}\ar@{->}[d] & \ldots & \alpha_{n}\ar@{->}[d]\\
A_{1} & \ldots & A_{n}
}
}.\label{diag:bijective DSI}
\end{equation}

\protect{\noindent}We can simplify the subsequent discussion without
sacrificing generality by agreeing to reduce each arrow diagram (in
the context of selective influences) to a bijective form, by appropriately
redefining factors and treatments. It is obvious how this should be
done. Given the subsets of factors $\Phi_{1}\ldots,\Phi_{n}$ determined
by an arrow diagram (\ref{eq:DSI}), each $\Phi_{i}$ can be viewed
as a factor identified with the set of factor points 
\[
\alpha_{i}^{*}=\left\{ (\phi_{\Phi_{i}})^{\alpha_{i}^{*}}:\phi\in T\right\} ,
\]
in accordance with the notation we have adopted for factor points:
$(\phi_{\Phi_{i}})^{\alpha_{i}^{*}}=(\phi_{\Phi_{i}},`\alpha^{*}\textnormal{'})$.
If $\Phi_{i}$ is empty, then $\phi_{\Phi_{i}}$ is empty too, and
the factor $\alpha_{i}^{*}$ consists of only the dummy factor point
$\varnothing^{\alpha_{i}}$ (where $\varnothing$ denotes the empty
set). The set of treatments $T$ for the original factors $\{\alpha_{1},\ldots,\alpha_{m}\}$
should then be redefined for the vector of new factors $(\alpha_{1}^{*},\ldots,\alpha_{n}^{*})$
as
\[
T^{*}=\left\{ \left\{ (\phi_{\Phi_{1}})^{\alpha_{1}^{*}},\ldots,(\phi_{\Phi_{n}})^{\alpha_{n}^{*}}\right\} :\phi\in T\right\} \subset\alpha_{1}^{*}\times\ldots\times\alpha_{n}^{*}.
\]
We call this (re)definition of factor points, factors, and treatments
the \emph{canonical (re)arrangement.} We can say that the random variables
following canonical (re)arrangement can be indexed by the corresponding
factors. Thus, when convenient, we can write in (\ref{diag:bijective DSI})
$A_{\left\{ \alpha_{1}\right\} }$ in place of $A_{1}$, $A_{\left\{ \alpha_{2}\right\} }$
in place of $A_{2},$ etc. The notation $\phi_{\Phi_{i}}=\phi_{\left\{ \alpha_{i}\right\} }$
then indicates the singleton set $\left\{ x^{\alpha_{i}}\right\} \subset\phi$.
As usual, we write $x^{\alpha_{i}}$ in place of $\left\{ x^{\alpha_{i}}\right\} $:
\[
\phi_{\left\{ \alpha_{i}\right\} }=\left\{ x_{1}^{\alpha_{1}},\ldots,x_{n}^{\alpha_{n}}\right\} _{\left\{ \alpha_{i}\right\} }=x_{i}^{\alpha_{i}}.
\]

\subsection{The criterion}
\begin{defn}[\emph{Selective influences, bijective form}]
\label{def:(SI,-bijective)} An arrow diagram (\ref{diag:bijective DSI})
is said to be the diagram of selective influences for $(A_{1},\ldots,A_{n})(\phi)$
and $(\alpha_{1},\ldots,\alpha_{n})$, and we write 
\[
(A_{1},\ldots,A_{n})\looparrowleft(\alpha_{1},\ldots,\alpha_{n}),
\]
if, for some random entity $R$ and for any treatment $\phi=\left\{ x_{1}^{\alpha_{1}},\ldots,x_{n}^{\alpha_{n}}\right\} \in T$,
\begin{equation}
\begin{array}{r}
(A_{1},\ldots,A_{n})(\phi)\sim\left(f_{1}(\phi_{\{\alpha_{1}\}},R),\ldots,f_{n}(\phi_{\{\alpha_{n}\}},R)\right)\\
\\
=\left(f_{1}(x_{1}^{\alpha_{1}},R),\ldots,f_{n}(x_{n}^{\alpha_{n}},R)\right),
\end{array}
\end{equation}
where $f_{i}:\alpha_{i}\times\mathcal{R}\rightarrow\mathcal{A}_{i}$
($i=1,\ldots,n$) are some functions, with $\mathcal{R}$ denoting
the set of possible values of $R$.%
\footnote{It will be shown below, Theorem \ref{thm:In-Definition-,}, that random
entity $R$ can always be chosen to be a random variable (in the narrow
sense).%
} 

\

This definition is difficult to put to work, as it refers to an existence
of a random entity (variable) $R$ without showing how one can find
it or prove that it cannot be found. The following criterion (necessary
and sufficient condition) for $(A_{1},\ldots,A_{n})\looparrowleft(\alpha_{1},\ldots,\alpha_{n})$
circumvents this problem. \end{defn}
\begin{criterion}[\emph{Joint Distribution Criterion}, JDC]
A vector of random variables $A(\phi)=(A_{1},\ldots,A_{n})(\phi)$
satisfies a diagram of selective influences (\ref{diag:bijective DSI})
if and only if there is a vector of jointly distributed random variables
\[
H=\left(\overset{\textnormal{for }\alpha_{1}}{\overbrace{H_{x_{1}^{\alpha_{1}}},\ldots,H_{x_{k_{1}}^{\alpha_{1}}}}},\ldots,\overset{\textnormal{for }\alpha_{n}}{\overbrace{H_{x_{1}^{\alpha_{n}}},\ldots,H_{x_{k_{n}}^{\alpha_{n}}}}}\right),
\]
one random variable for each factor point of each factor, such that
\begin{equation}
\left(H_{\phi_{\{\alpha_{1}\}}},\dots,H_{\phi_{\{\alpha_{n}\}}}\right)\sim A(\phi)
\end{equation}
for every treatment $\phi\in T$\textup{\emph{.}}
\end{criterion}
See Dzhafarov and Kujala (2010) for a proof. The vector $H$ in the
formulation of the JDC is referred to as the \emph{JDC-vector for}
$A(\phi)$, or the \emph{hypothetical JDC-vector for} $A(\phi)$,
if the existence of such a vector of jointly distributed variables
is in question.

The JDC prompts a simple justification for our definition of selective
influences. Let, for example, $(A,B,C)\looparrowleft(\alpha,\beta,\gamma)$,
with $\alpha=\left\{ 1^{\alpha},2^{\alpha}\right\} $, $\beta=\left\{ 1^{\beta},2^{\beta},3^{\beta}\right\} $,
$\gamma=\left\{ 1^{\gamma},2^{\gamma},3^{\gamma},4^{\gamma}\right\} $.
Consider all treatments $\phi$ in which the factor point of $\alpha$
is fixed, say, at $1^{\alpha}$. If $(A,B,C)\looparrowleft(\alpha,\beta,\gamma)$,
then in the vectors of random variables
\[
(A,B,C)\left(1^{\alpha},2^{\beta},1^{\gamma}\right),(A,B,C)\left(1^{\alpha},2^{\beta},3^{\gamma}\right),(A,B,C)\left(1^{\alpha},3^{\beta},1^{\gamma}\right)
\]
the marginal distribution of the variable $A$ is one and the same,
\[
A\left(1^{\alpha},2^{\beta},1^{\gamma}\right)\sim A\left(1^{\alpha},2^{\beta},3^{\gamma}\right)\sim A\left(1^{\alpha},3^{\beta},1^{\gamma}\right).
\]
But the intuition of selective influences requires more: that we can
denote this variable $A\left(1^{\alpha}\right)$ because it \emph{preserves
its identity} (and not just its distribution) no matter what other
variables it is paired with, $(B,C)\left(2^{\beta},1^{\gamma}\right)$,
$(B,C)\left(2^{\beta},3^{\gamma}\right)$, or $(B,C)\left(3^{\beta},1^{\gamma}\right)$.
Analogous statements hold for $A\left(2^{\alpha}\right)$, $B\left(2^{\beta}\right)$,
$B\left(3^{\beta}\right)$, $C\left(1^{\gamma}\right)$, etc. The
JDC formalizes the intuitive notion of variables ``preserving their
identity'' when entering in various combinations with each other:
there are jointly distributed random variables 
\[
H_{1^{\alpha}},H_{2^{\alpha}},H_{1^{\beta}},H_{2^{\beta}},H_{3^{\beta}},H_{1^{\gamma}},H_{2^{\gamma}},H_{3^{\gamma}},H_{4^{\gamma}}
\]
whose identity is defined by this joint distribution; when $H_{1^{\alpha}}$
is combined with random variables $H_{2^{\beta}}$ and $H_{1^{\gamma}}$,
it forms the triad $(H_{1^{\alpha}},H_{2^{\beta}},H_{1^{\gamma}})$
whose distribution is the same as that of $(A,B,C)\left(1^{\alpha},2^{\beta},1^{\gamma}\right)$;
when the same random variable $H_{1^{\alpha}}$ is combined with random
variables $H_{2^{\beta}}$ and $H_{3^{\gamma}}$, the triad $(H_{1^{\alpha}},H_{2^{\beta}},H_{3^{\gamma}})$
is distributed as $(A,B,C)\left(1^{\alpha},2^{\beta},3^{\gamma}\right)$;
and so on --- the key concept being that it is \emph{one and the same}
$H_{1^{\alpha}}$ which is being paired with other variables, as opposed
to different random variables $A\left(1^{\alpha},2^{\beta},1^{\gamma}\right),A\left(1^{\alpha},2^{\beta},3^{\gamma}\right),A\left(1^{\alpha},3^{\beta},1^{\gamma}\right)$
which are identically distributed. See Dzhafarov and Kujala (2010)
for a demonstration that the identity is not generally preserved if
all we know is marginal selectivity (as defined in Section \ref{sub:Three-basic-properties}).

The following is an important consequence of JDC.
\begin{thm}
\label{thm:In-Definition-,}In Definition \ref{def:(SI,-bijective)},
\textup{the random entity }$R$ can always be chosen to be a random
variable. Moreover, $R$ can be chosen arbitrarily, as any continuously
(atomlessly) distributed random variable, e.g., uniformly distributed
between 0 and 1. \end{thm}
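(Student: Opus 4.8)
The plan is to reduce the statement to the JDC and then invoke a standard measure-theoretic fact: any probability distribution on a standard Borel space (in particular, on the countable-or-interval product spaces that the paper's random variables live on) is the pushforward of Lebesgue measure on $[0,1]$ under a measurable map. First I would observe that, by the Joint Distribution Criterion, $(A_1,\ldots,A_n)\looparrowleft(\alpha_1,\ldots,\alpha_n)$ is equivalent to the existence of a JDC-vector $H=\left(H_{x_1^{\alpha_1}},\ldots,H_{x_{k_n}^{\alpha_n}}\right)$ of jointly distributed random variables satisfying $\left(H_{\phi_{\{\alpha_1\}}},\ldots,H_{\phi_{\{\alpha_n\}}}\right)\sim A(\phi)$ for every $\phi\in T$. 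Note that $H$ is itself a random variable in the narrow sense (a finite vector of random variables, case (iii) of the paper's definition), so its joint distribution is a probability measure $\nu$ on a product space $\mathcal{H}=\prod\mathcal{H}_{x^\alpha}$, where each $\mathcal{H}_{x^\alpha}$ is the observation space $\mathcal{A}_i$ of the corresponding $A_i$ — i.e. a countable set with the power-set $\sigma$-algebra or a real interval with the Lebesgue $\sigma$-algebra, or a finite product of such. Each such space is standard Borel, hence so is the finite product $\mathcal{H}$.

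Next I would use the representation lemma: if $U$ is uniformly distributed on $[0,1]$ and $\nu$ is a probability measure on a standard Borel space $\mathcal{H}$, then there is a measurable map $g:[0,1]\to\mathcal{H}$ with $g(U)\sim\nu$. (Concretely one builds $g$ coordinate by coordinate using conditional quantile transforms, or one simply cites the fact that all atomless standard probability spaces are isomorphic mod $0$ and that any distribution on a standard Borel space is a Borel image of such.) Write $g=(g_{x^\alpha})_{x^\alpha}$ componentwise, so that $\left(g_{x^\alpha}(U)\right)_{x^\alpha}\sim H$. Now set $R:=U$, with $\mathcal{R}=[0,1]$, and define, for each $i$ and each value $x^{\alpha_i}\in\alpha_i$, $f_i(x^{\alpha_i},r):=g_{x^{\alpha_i}}(r)$ for $r\in[0,1]$ (and arbitrarily, say constantly, off $[0,1]$). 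Each $f_i:\alpha_i\times[0,1]\to\mathcal{A}_i$ is measurable in its second argument. Then for any $\phi=\{x_1^{\alpha_1},\ldots,x_n^{\alpha_n}\}\in T$ we have
\[
\left(f_1(x_1^{\alpha_1},R),\ldots,f_n(x_n^{\alpha_n},R)\right)=\left(g_{x_1^{\alpha_1}}(U),\ldots,g_{x_n^{\alpha_n}}(U)\right)\sim\left(H_{\phi_{\{\alpha_1\}}},\ldots,H_{\phi_{\{\alpha_n\}}}\right)\sim A(\phi),
\]
which is exactly the condition in Definition \ref{def:(SI,-bijective)} with $R$ a uniform random variable on $[0,1]$. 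Replacing $U$ by any other atomlessly distributed random variable $R'$ is immediate, since such an $R'$ can be written as $h(U)$ for a measurable $h$ (again by the isomorphism of atomless standard spaces), so composing $g$ with a right inverse of $h$ gives the required $f_i$; this yields the "arbitrarily chosen" clause.

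The only real obstacle is making the representation lemma airtight across the three types of observation spaces the paper allows, especially the mixed case where some coordinates are countable-discrete and others are real intervals: one must check the product is still standard Borel and that the coordinatewise quantile construction is jointly measurable. This is routine — finite products of standard Borel spaces are standard Borel, and the Borel isomorphism theorem does the rest — but it is worth stating cleanly rather than waving at it. Everything else (invoking JDC, pushing forward, reading off the $f_i$) is bookkeeping.
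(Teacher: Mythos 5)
Your proposal is correct and follows essentially the same route as the paper: both reduce the claim to the existence of the JDC-vector $H$ (itself a random variable in the narrow sense) and then rely on the standard-Borel-space representation of its distribution as a measurable image of an atomless (e.g., uniform on $[0,1]$) random variable. The only difference is that the paper handles the first clause even more directly by taking $R=H$ with the $f_i$ as coordinate projections, and outsources the atomless-representation step to Theorem 1 of Dzhafarov and Gluhovsky (2006) (resting on the same result in Kechris that you invoke), whereas you carry that representation out explicitly.
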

\begin{proof}
The first statement follows from the fact that $R$ can be chosen
to coincide with the JDC-vector $H$, so that
\[
f_{i}(x^{\alpha_{i}},H)=H_{x^{\alpha_{i}}}^{\alpha_{i}},
\]
for $i=1,\ldots,n$, and $x^{\alpha_{i}}\in\alpha_{i}$. The JDC-vector
$H$ is a random variable. The second statement follows from Theorem
1 in Dzhafarov \& Gluhovsky, 2006, based on a general result for standard
Borel spaces (e.g., in Kechris, 1995, p. 116).
\end{proof}

\subsection{Three basic properties of selective influences\label{sub:Three-basic-properties}}

For completeness, we list three other fundamental consequences of
JDC (Dzhafarov \& Kujala, 2010).

\subsubsection{Nestedness.}

For any subset $\left\{ i_{1},\ldots,i_{k}\right\} $ of $\left\{ 1,\ldots,n\right\} $,
if $(A_{1},\ldots,A_{n})\looparrowleft(\alpha_{1},\ldots,\alpha_{n})$
then $(A_{i_{1}},\ldots,A_{i_{k}})\looparrowleft(\alpha_{i_{1}},\ldots,\alpha_{i_{k}})$.

\subsubsection{Complete Marginal Selectivity}

For any subset $\left\{ i_{1},\ldots,i_{k}\right\} $ of $\left\{ 1,\ldots,n\right\} $,
if $(A_{1},\ldots,A_{n})\looparrowleft(\alpha_{1},\ldots,\alpha_{n})$
then the $k$-marginal distribution%
\footnote{$k$-marginal distribution is the distribution of a subset of $k$
random variables ($k\geq1$) in a set of $n\geq k$ variables. In
Townsend and Schweickert (1989) the property was formulated for 1-marginals
of a pair of random variables. The adjective ``complete'' we use
with ``marginal selectivity'' is to emphasize that we deal with
all possible marginals rather than with just 1-marginals. %
} of $(A_{i_{1}},\ldots,A_{i_{k}})(\phi)$ does not depend on points
of the factors outside $(\alpha_{i_{1}},\ldots,\alpha_{i_{k}})$.
In particular, the distribution of $A_{i}$ only depends on points
of $\alpha_{i}$, $i=1,\ldots,n$. 

This is, of course, a trivial consequence of the nestedness property,
but its importance lies in that it provides the easiest to check necessary
condition for selective influences.

\subsubsection{\label{sub:Invariance}Invariance under factor-point-specific transformations}

Let $(A_{1},\ldots,A_{n})\looparrowleft(\alpha_{1},\ldots,\alpha_{n})$
and 
\[
H=\left(H_{x_{1}^{\alpha_{1}}},\ldots,H_{x_{k_{1}}^{\alpha_{i}}},\ldots,H_{x_{1}^{\alpha_{n}}},\ldots,H_{x_{k_{n}}^{\alpha_{n}}}\right)
\]
be the JDC-vector for $(A_{1},\ldots,A_{n})(\phi)$. Let $F$ be any
function that applies to $H$ componentwise and produces a corresponding
vector of random variables
\[
F\left(H\right)=\left(\begin{array}{c}
F\left(x_{1}^{\alpha_{1}},H_{x_{1}^{\alpha_{1}}}\right),\ldots,F\left(x_{k_{1}}^{\alpha_{i}},H_{x_{k_{1}}^{\alpha_{i}}}\right),\\
\ldots,\\
F\left(x_{1}^{\alpha_{n}},H_{x_{1}^{\alpha_{n}}}\right),\ldots,F\left(x_{k_{n}}^{\alpha_{n}},H_{x_{k_{n}}^{\alpha_{n}}}\right)
\end{array}\right),
\]
where we denote by $F\left(x^{\alpha},\cdot\right)$ the application
of $F$ to the component labeled by $x^{\alpha}$. Clearly, $F\left(H\right)$
possesses a joint distribution and contains one component for each
factor point. If we now define a vector of random variables $B\left(\phi\right)$
for every treatment $\phi\in T$ as
\[
(B_{1},\ldots,B_{n})\left(\phi\right)=\left(F\left(\phi_{\left\{ \alpha_{1}\right\} },A_{1}\right),\ldots,F\left(\phi_{\left\{ \alpha_{n}\right\} },A_{n}\right)\right)\left(\phi\right),
\]
then it follows from JDC that $(B_{1},\ldots,B_{n})\looparrowleft(\alpha_{1},\ldots,\alpha_{n})$.%
\footnote{Since it is possible that $F\left(x^{\alpha},H_{x^{\alpha}}\right)$
and $F\left(y^{\alpha},H_{y^{\alpha}}\right)$, with $x^{\alpha}\not=y^{\alpha}$,
have different sets of possible values, strictly speaking, one may
need to redefine the functions to ensure that the sets of possible
values for $B\left(\phi\right)$ is the same for different $\phi$.
This is, however, not essential (see footnote \ref{fn:Invariance}).%
} A function $F\left(x^{\alpha_{i}},\cdot\right)$ can be referred
to as a \emph{factor-point-specific transformation} of the random
variable $A_{i}$, because the random variable is transformed differently
for different points of the factor assumed to selectively influence
it. We can formulate the property in question by saying that a diagram
of selective influences is invariant under all factor-point-specific
transformations of the random variables. Note that this includes as
a special case transformations which are not factor-point-specific,
with 
\[
F\left(x_{1}^{\alpha_{i}},\cdot\right)\equiv\ldots\equiv F\left(x_{k_{i}}^{\alpha_{i}},\cdot\right)\equiv F\left(\alpha_{i},\cdot\right).
\]
This property is important for construction and use of tests for selective
influences (Dzhafarov \& Kujala, 2010; Kujala \& Dzhafarov, 2008).

\subsection{\label{sec:Quantum-Entanglement-and}Quantum entanglement and selective
influences}

In psychology, the notion of selective influences was introduced by
Sternberg (1969), in the context of studying ``stages'' of information
processing. Sternberg acknowledged that selective influences can hold
even if the durations of the stages being selectively affected are
not stochastically independent, but he lacked the mathematical apparatus
for dealing with this possibility. Townsend (1984) was the first to
study the notion of selectiveness under stochastic interdependence
systematically. He proposed to formalize the notion of selectively
influenced and stochastically interdependent random variables by the
concept of ``\emph{indirect nonselectiveness}'': the conditional
distribution of the variable $A$$_{1}$ given any value $a_{2}$
of the variable $A_{2}$, depends on $\alpha_{1}$ only, and, by symmetry,
the conditional distribution of $A_{2}$ at any $A_{1}=a_{1}$ depends
on $\alpha_{2}$ only. Under the name of ``\emph{conditionally selective
influence}'' this notion was mathematically characterized and generalized
in Dzhafarov (1999). It turned out, however, that this notion could
not serve as a general definition of selective influences, because
it did not satisfy some intuitive desiderata for such a definition,
e.g., the nestedness and marginal selectivity properties formulated
in Section \ref{sub:Three-basic-properties}. Variants of Definition
\ref{def:(SI,-bijective)} of the present paper were proposed in Dzhafarov
(2003) and both elaborated and generalized in Dzhafarov and Gluhovsky
(2006), Kujala and Dzhafarov (2008); JDC was explicitly formulated
in Dzhafarov and Kujala (2010), although clearly implied in the earlier
work. 

Until very recently (see footnote \ref{fn:We-are-grateful}) we were
blissfully unaware of the analogous developments in quantum physics.
The most conspicuous parallels can be found in Fine (1981a-b), but
that work in turn builds on a venerable line of research and thinking:
going back first to Bell (1964), and ultimately to Einstein, Podolsky,
and Rosen's (1935) paper. The issue in question regards two ``noncommuting''
measurements, such as those of the momentum and of the location of
a particle, or spin measurements along two different axes. For our
purposes it is sufficient to state that when one of two noncommuting
measurements is performed (without uncertainty about the result),
the second one cannot be performed on the same system. The key insight
needed to understand the analogy with the problem of selective influences
is this:\emph{ noncommuting measurements on the same system, being
mutually exclusive, can be viewed as levels (mutually exclusive values)
of one and the same external factor}. 

This is not entirely intuitive. Consider two particles for each of
which one can measure its momentum or its location. The analogy requires
that one view the measurement on particle 1 as a factor $\alpha_{1}$
with two mutually exclusive levels, $1^{\alpha_{1}}$ (location measurement)
and $2^{\alpha_{1}}$ (momentum measurement); and the measurement
on particle 2 is a factor $\alpha_{2}$ with two mutually exclusive
levels, $1^{\alpha_{2}}$ and $2^{\alpha_{2}}$, interpreted analogously.
The two measurements can be combined in treatments, $\left(1^{\alpha_{1}},1^{\alpha_{2}}\right)$,
$\left(1^{\alpha_{1}},2^{\alpha_{2}}\right)$, etc., but not within
a factor, $\left(1^{\alpha_{1}},2^{\alpha_{1}}\right)$ or $\left(1^{\alpha_{2}},2^{\alpha_{2}}\right)$.
The results of each of the measurements is a random variable, $A_{1}$
for particle 1 and $A_{2}$ for particle 2. The possible values $\mathcal{A}_{1}$
for $A_{1}$ are possible locations of particle 1 if $\alpha_{1}$
is at level $1^{\alpha_{1}}$, but they are possible momentum values
for particle 1 if $\alpha_{1}$ is at level $2^{\alpha_{1}}$ (which
makes it awkward but still possible to maintain the convention mentioned
in footnote \ref{fn:Invariance}). It is easier with spins (Bohm \&
Aharonov, 1957): for instance, for spin$\textnormal{-}\nicefrac{1}{2}$
particles (such as electrons), $\mathcal{A}_{1}$ consists of two
possible values of spin in one direction if $\alpha_{1}$ is at level
$1^{\alpha_{1}}$ and of two possible values of spin in another direction
if the level is $2^{\alpha_{1}}$. These two two-element sets are
more natural to consider ``the same.'' 

With all this in mind, the question now can be posed in the familiar
to us form: can we say that $\left(A_{1},A_{2}\right)\looparrowleft\left(\alpha_{1},\alpha_{2}\right)$,
or can the measurement (factor) $\alpha_{1}$ influence the result
(random variable) $A_{2}$ and/or $\alpha_{2}$ influence $A_{1}$?
In the Einstein-Podolsky-Rosen (EPR) paradigm involving \emph{entangled}
particles, the two random outcomes $A_{1},A_{2}$ are stochastically
interdependent, and their joint distribution at every treatment is
(correctly) predicted by the quantum theory. The question therefore
becomes: are the predicted (and observed) joint distributions of $\left(A_{1},A_{2}\right)$
compatible with the hypothesis $\left(A_{1},A_{2}\right)\looparrowleft\left(\alpha_{1},\alpha_{2}\right)$?
Einstein, Podolsky, and Rosen (1935) took $\left(A_{1},A_{2}\right)\looparrowleft\left(\alpha_{1},\alpha_{2}\right)$
for granted if the two particles are separated in space and measured
simultaneously (in some inertial frame of reference).

Bell's (1964) celebrated theorem shows that $\left(A_{1},A_{2}\right)\looparrowleft\left(\alpha_{1},\alpha_{2}\right)$
is not the case for entangled spin$\textnormal{-}\nicefrac{1}{2}$
particles obeying the laws of quantum mechanics. The reason this result
is considered to be of foundational importance (``the most profound
discovery in science,'' repeating the oft-quoted characterization
by Stapp, 1975) is that Bell essentially adopted Definition \ref{def:(SI,-bijective)}
for $\left(A_{1},A_{2}\right)\looparrowleft\left(\alpha_{1},\alpha_{2}\right)$
and identified the random entity $R$ with the set of all hidden variables
of a conceivable theory ``explaining'' the dependence of $\left(A_{1},A_{2}\right)$
on $\left(\alpha_{1},\alpha_{2}\right)$: knowing a value of $R$
one would be able to predict, through the functions $f_{1}$ and $f_{2}$
of Definition \ref{def:(SI,-bijective)}, the values of $\left(A_{1},A_{2}\right)$.
In addition to being called ``hidden'' the variables entailed in
$R$ are referred to as ``context-independent'' (meaning that the
distribution of $R$ and the functions $f_{1}$,$f_{2}$ do not depend
on treatments) and ``local'' (meaning, essentially, that in the
theory involving $R$ and $f_{1}$,$f_{2}$ the measurement $\alpha_{1}$
does not influence $A_{2}$, nor $\alpha_{2}$ influences $A_{1}$).
Bell's (1964) theorem therefore is interpreted as stating that quantum
predictions regarding two entangled spin$\textnormal{-}\nicefrac{1}{2}$
particles cannot be explained by any theory involving context-independent
and local variables. 

A rejection of $\left(A_{1},A_{2}\right)\looparrowleft\left(\alpha_{1},\alpha_{2}\right)$
in quantum physics can be handled by dispensing with locality (Bohm's
approach), but most physicists find this untenable (measurement $\alpha_{1}$
cannot influence $A_{2}$ if they are separated by a space-like interval).
The quantum probability theory can be viewed as a way of allowing
for context-dependence while retaining locality. In behavioral applications
both locality and context-independence can be targeted when $\left(A_{1},A_{2}\right)\looparrowleft\left(\alpha_{1},\alpha_{2}\right)$
is rejected, and distinguishing the two is a challenge.

Following the logic of Bell's work, Clauser, Horne, Shimony, \& Holt
(1969) derived a system of inequalities that are necessary conditions
for $\left(A_{1},A_{2}\right)\looparrowleft\left(\alpha_{1},\alpha_{2}\right)$
in the EPR paradigm with two particles and two measurements (factors)
with binary outcomes. These inequalities are subsumed in Fine's (1982a-b)
ones (discussed in Section \ref{sub:Quantum-entanglement}), which
present both necessary and sufficient conditions for $\left(A_{1},A_{2}\right)\looparrowleft\left(\alpha_{1},\alpha_{2}\right)$,
based on JDC. The latter was introduced in Fine's papers for the first
time (and called by this name too), although the earlier Suppes and
Zanotti's (1981) Theorem on Common Causes can also be viewed as a
special form of JDC. 

Fine's inequalities form a special case of the Linear Feasibility
Test considered in the next section. We therefore defer further discussion
of the EPR paradigm to Section \ref{sub:Quantum-entanglement}, and
conclude the present section by the following table of correspondences:

\begin{widetext}

\begin{center}
\begin{tabular}{|c|c|}
\hline 
{\small Selective Probabilistic Causality} & {\small Quantum Entanglement Problem (for spins)}\tabularnewline
\hline 
\hline 
{\footnotesize observed random output} & {\footnotesize detected spin value of a given particle}\tabularnewline
\hline 
{\footnotesize factor/input} & {\footnotesize spin measurement in a given particle}\tabularnewline
\hline 
{\footnotesize factor level} & {\footnotesize setting (axis) of the spin measurement}\tabularnewline
\hline 
{\footnotesize joint distribution criterion} & {\footnotesize joint distribution criterion}\tabularnewline
\hline 
{\footnotesize canonical diagram of selective influences} & {\footnotesize ``classical'' explanation (by context-independent
local variables)}\tabularnewline
\hline 
\end{tabular}
\par\end{center}

\end{widetext}

\section{\label{sec:Linear-Feasibility-Test}Linear Feasibility Test}

In this section we assume that for each random variable $A_{i}(\phi)$
in $(A_{1},\ldots,A_{n})(\phi)$ the set $\mathcal{A}_{i}$ of its
possible values has $m_{i}$ elements, $a_{1}^{i},\ldots,a_{m_{i}}^{i}$.
It is arguably the most important special case both because it is
ubiquitous and because in all other cases random variables can be
discretized into finite number of categories. We are interested in
establishing the truth or falsity of the diagram of selective influences
(\ref{diag:bijective DSI}), where each factor $\alpha_{j}$ in $(\alpha_{1},\ldots,\alpha_{n})$
contains $k_{j}$ factor points $x_{1}^{j},\ldots,x_{k_{j}}^{j}$
(written so instead of more formal $x_{1}^{\alpha_{j}},\ldots,x_{k_{j}}^{\alpha_{j}}$).
The \emph{Linear Feasibility Test} (LFT) to be described is a direct
application of JDC to this situation, furnishing a necessary and sufficient
condition for the diagram of selective influences $(A_{1},\ldots,A_{n})\looparrowleft(\alpha_{1},\ldots,\alpha_{n})$.

\subsection{The test}

In the hypothetical JDC-vector
\[
H=\left(H_{x_{1}^{1}},\ldots,H_{x_{k_{1}}^{1}},\ldots,H_{x_{1}^{n}},\ldots,H_{x_{k_{n}}^{n}}\right),
\]
since we assume that, for any point $x_{j}^{i}$ of factor $\alpha^{i}$
and any treatment $\phi$ containing $x_{j}^{i}$, 
\[
H_{x_{j}^{i}}\sim A_{i}\left(\phi\right),
\]
we know that the set of possible values for the random variable $H_{x_{j}^{i}}$
is $\left\{ a_{1}^{i},\ldots,a_{m_{i}}^{i}\right\} $, irrespective
of $j$. Denote 
\begin{equation}
\begin{array}{r}
\begin{array}{l}
\Pr\left[\left(A_{1}=a_{l_{1}}^{1},\ldots,A_{n}=a_{l_{n}}^{n}\right)\left(x_{j_{1}}^{1},\ldots,x_{j_{n}}^{n}\right)\right]\\
\\
=P\left(\stackrel{\textnormal{for r.v.s}}{\overbrace{l_{1},\ldots,l_{n}}}\,;\,\stackrel{\textnormal{for factor points}}{\overbrace{j_{1},\ldots,j_{n}}}\right),
\end{array}\end{array}\label{eq:p's}
\end{equation}
where $l_{i}\in\left\{ 1,\ldots,m_{i}\right\} $ and $j_{i}\in\left\{ 1,\ldots,k_{i}\right\} $
for $i=1,\ldots,n$ (``r.v.s'' abbreviates ``random variables'').
Denote 
\begin{equation}
\begin{array}{l}
\Pr\left[\begin{array}{c}
H_{x_{1}^{1}}=a_{l_{11}}^{1},\ldots,H_{x_{k_{1}}^{1}}=a_{l_{1k_{1}}}^{1},\\
\ldots,\\
H_{x_{1}^{n}}=a_{l_{n1}}^{n},\ldots,H_{x_{k_{n}}^{n}}=a_{l_{nk_{n}}}^{n}
\end{array}\right]\\
\\
=Q\left(\stackrel{\textnormal{for }A_{1}}{\overbrace{l_{11},\ldots,l_{1k_{1}}}},\ldots,\stackrel{\textnormal{for }A_{n}}{\overbrace{l_{n1},\ldots,l_{nk_{n}}}}\right),
\end{array}\label{eq:q's}
\end{equation}
where $l_{ij}\in\left\{ 1,\ldots,m_{i}\right\} $ for $i=1,\ldots,n$.
This gives us $m_{1}^{k_{1}}\times\ldots\times m_{n}^{k_{n}}$ $Q\textnormal{-}$probabilities.
A required joint distribution for the JDC-vector $H$ exists if and
only if these probabilities can be found subject to $m_{1}^{k_{1}}\times\ldots\times m_{n}^{k_{n}}$
nonnegativity constraints

\begin{equation}
Q\left(l_{11},\ldots,l_{1k_{1}},\ldots,l_{n1},\ldots,l_{nk_{n}}\right)\geq0,\label{eq:nonegativity}
\end{equation}
and (denoting by $n_{T}$ the number of treatments in $T$) $n_{T}\times m_{1}\times\ldots\times m_{n}$
linear equations
\begin{equation}
\begin{array}{r}
\sum Q\left(l_{11},\ldots,l_{1k_{1}},\ldots,l_{n1},\ldots,l_{nk_{n}}\right)\\
\\
=P\left(l_{1},\ldots,l_{n};j_{1},\ldots,j_{n}\right),
\end{array}\label{eq:linear equations}
\end{equation}
where the summation is across all possible values of the $\left(l_{11},\ldots,l_{1k_{1}},\ldots,l_{n1},\ldots,l_{nk_{n}}\right)$
subject to
\[
l_{1j_{1}}=l_{1},\ldots,l_{nj_{n}}=l_{n}.\footnotemark
\]

\footnotetext{The sum of all $Q$'s is 1 because it equals the sum of all $P$'s (across all $l_1,\ldots,l_n$) for any given treatment $j_1,\ldots,j_n$.}

This can be more compactly formulated in a matrix form. Let the observable
probabilities $P\left(l_{1},\ldots,l_{n};j_{1},\ldots,j_{n}\right)$
constitute components of a $n_{T}\times m_{1}\times\ldots\times m_{n}\textnormal{-}$dimensional
column vector $\mathbf{P}$, with its cells lexicographically enumerated
by $\left(l_{1},\ldots,l_{n};j_{1},\ldots,j_{n}\right)$. Let the
hypothetical probabilities $Q\left(l_{11},\ldots,l_{1k_{1}},\ldots,l_{n1},\ldots,l_{nk_{n}}\right)$
constitute components of a $m_{1}^{k_{1}}\times\ldots\times m_{n}^{k_{n}}\textnormal{-}$dimensional
column vector $\mathbf{Q}$, with its cells lexicographically enumerated
by $\left(l_{11},\ldots,l_{1k_{1}},\ldots,l_{n1},\ldots,l_{nk_{n}}\right)$.
Let $\mathbf{M}$ be a Boolean matrix with $n_{T}\times m_{1}\times\ldots\times m_{n}$
rows and $m_{1}^{k_{1}}\times\ldots\times m_{n}^{k_{n}}$ columns
lexicographically enumerated in the same way as, respectively, $\mathbf{P}$
and $\mathbf{Q}$, such that the entry in the cell in the $\left(l_{1},\ldots,l_{n};j_{1},\ldots,j_{n}\right)$th
row and $\left(l_{11},\ldots,l_{1k_{1}},\ldots,l_{n1},\ldots,l_{nk_{n}}\right)$th
column is 1 if $l_{1j_{1}}=l_{1},\ldots,l_{nj_{n}}=l_{n}$; otherwise
the entry is 0. Clearly, the vector $Q$ exists if and only if the
system 
\begin{equation}
\mbox{\ensuremath{\mathbf{MQ}=\mathbf{P},\;\mathbf{Q}\geq0}}\label{eq:mq=00003Dp}
\end{equation}
(with the inequality understood componentwise) has a solution. This
is a typical \emph{linear programming} (LP) problem. More precisely,
this is an LP task in the standard form and with a dummy objective
function (e.g., a linear combination with zero coefficients). It is
known (Karmarkar, 1984; Khachiyan, 1979) that it is always possible,
in polynomial time, to either find a solution for such a system or
to determine that it does not exist. Many standard software packages
can handle this problem (e.g., GNU Linear Programming Kit at http://www.gnu.org/software/glpk/).

\subsection{\label{sub:Properties-of-the}Properties of the LP problem}

The rank of matrix $\mathbf{M}$ is always strictly smaller than the
number of components in $\mathbf{P}$. This follows from the fact
that for any two allowable treatments $\left(j_{1},\ldots,j_{n}\right)$
and $\left(j'_{1},\ldots,j'_{n}\right)$ that share a subvector 
\[
\left(j_{1'},\ldots,j_{s'}\right)=\left(j'_{1'},\ldots,j'_{s'}\right)
\]
(where we use $\left\{ 1',\ldots,s'\right\} $ to designate $s$ distinct
elements chosen from $\left\{ 1,\ldots,n\right\} $), and for any
fixed $\left(v_{1},\ldots,v_{s}\right)$, the sum of all rows of $\mathbf{M}$
corresponding to $\left(l_{1},\ldots,l_{n};j_{1},\ldots,j_{n}\right)$th
components of $\mathbf{P}$ with $\left(l_{1'},\ldots,l_{s'}\right)=\left(v_{1},\ldots,v_{s}\right)$
is the same Boolean vector as the sum of all rows of $\mathbf{M}$
corresponding to $\left(l_{1},\ldots,l_{n};j'_{1},\ldots,j'_{n}\right)$th
components of $\mathbf{P}$ with the same property. The upper limit
for the rank of matrix $\mathbf{M}$ is given in the following theorem. 
\begin{thm}
\label{thm:The-rank-of}The rank of $\mathbf{M}$ for a maximal set
of treatments $T=\alpha_{1}\times\ldots\times\alpha_{n}$ is \textup{
\[
\left(k_{1}\left(m_{1}-1\right)+1\right)\ldots\left(k_{n}\left(m_{n}-1\right)+1\right).
\]
}\end{thm}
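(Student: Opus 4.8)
The plan is to recognize a Kronecker--product structure in $\mathbf{M}$ and thereby reduce the computation to a single ``one--factor'' matrix. The starting observation is that the condition defining the $1$'s of $\mathbf{M}$, namely $l_{1j_{1}}=l_{1},\ldots,l_{nj_{n}}=l_{n}$, factors as $\prod_{i=1}^{n}\mathbf{1}\!\left[l_{ij_{i}}=l_{i}\right]$. Accordingly, for each factor $\alpha_{i}$ I introduce the Boolean matrix $\mathbf{M}_{i}$ with $k_{i}m_{i}$ rows indexed by pairs $(l,j)$, $l\in\{1,\ldots,m_{i}\}$, $j\in\{1,\ldots,k_{i}\}$, and $m_{i}^{k_{i}}$ columns indexed by tuples $(c_{1},\ldots,c_{k_{i}})\in\{1,\ldots,m_{i}\}^{k_{i}}$, with the $\bigl((l,j),(c_{1},\ldots,c_{k_{i}})\bigr)$ entry equal to $1$ iff $c_{j}=l$. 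Since $T$ is maximal, $n_{T}=k_{1}\cdots k_{n}$, so $\mathbf{M}$ has $k_{1}m_{1}\cdots k_{n}m_{n}$ rows and $m_{1}^{k_{1}}\cdots m_{n}^{k_{n}}$ columns, matching the dimensions of $\mathbf{M}_{1}\otimes\cdots\otimes\mathbf{M}_{n}$; and by the factorization of entries, after reindexing rows and columns by $n$--tuples of indices (one per factor), $\mathbf{M}=\mathbf{M}_{1}\otimes\cdots\otimes\mathbf{M}_{n}$. Because $\operatorname{rank}(A\otimes B)=\operatorname{rank}(A)\operatorname{rank}(B)$ and rank is invariant under permuting rows and columns, it remains to show $\operatorname{rank}\mathbf{M}_{i}=k_{i}(m_{i}-1)+1$.

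Fix $i$ and write $m=m_{i}$, $k=k_{i}$. For the upper bound, observe that for each fixed $j$ the rows $(1,j),\ldots,(m,j)$ of $\mathbf{M}_{i}$ sum to the all--ones vector, since $\sum_{l=1}^{m}\mathbf{1}\!\left[c_{j}=l\right]=1$ for every column $c$. Hence the $k-1$ vectors obtained as $\sum_{l}(\text{row }(l,j))-\sum_{l}(\text{row }(l,1))$ for $j=2,\ldots,k$ all vanish, and these are visibly linearly independent elements of the left kernel, so $\operatorname{rank}\mathbf{M}_{i}\le km-(k-1)=k(m-1)+1$. For the matching lower bound I exhibit $k(m-1)+1$ linearly independent rows: all rows $(l,j)$ with $1\le l\le m-1$ and $1\le j\le k$, together with the row $(m,1)$. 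Suppose a linear combination of these rows is the zero vector. Evaluating at the column $c=(m,m,\ldots,m)$ kills every row $(l,j)$ with $l\le m-1$ and leaves only $(m,1)$, forcing its coefficient to be $0$. Next, for each $l\le m-1$ and each $j$, evaluating at the column $c$ with $c_{j}=l$ and $c_{j'}=m$ for all $j'\ne j$ leaves only the row $(l,j)$ (and possibly $(m,1)$, whose coefficient is already $0$), forcing the coefficient of $(l,j)$ to be $0$. Thus all coefficients vanish, so $\operatorname{rank}\mathbf{M}_{i}\ge k(m-1)+1$, and combined with the upper bound, $\operatorname{rank}\mathbf{M}_{i}=k(m-1)+1$.

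Multiplying over $i=1,\ldots,n$ then gives $\operatorname{rank}\mathbf{M}=\prod_{i=1}^{n}\bigl(k_{i}(m_{i}-1)+1\bigr)$, as claimed. The only step requiring genuine verification is the lower bound for $\mathbf{M}_{i}$ --- producing the explicit independent set of rows and checking it against the handful of chosen columns --- and I anticipate no real difficulty there; the tensor decomposition and the upper--bound relations are immediate consequences of the factorization of the entries of $\mathbf{M}$. (One could instead characterize the image of $\mathbf{M}_{i}$ as the full $\bigl(k(m-1)+1\bigr)$--dimensional subspace cut out by the marginal--consistency equations ``each $j$ yields the same total mass,'' but the row--independence argument is more self--contained.)
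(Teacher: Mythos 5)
Your proof is correct, and it takes a genuinely different route from the paper's. You exploit the product structure of the index sets to write $\mathbf{M}$ (up to row and column permutations, which is where maximality of $T$ is used) as a Kronecker product $\mathbf{M}_{1}\otimes\cdots\otimes\mathbf{M}_{n}$ of one-factor matrices, invoke multiplicativity of rank under $\otimes$, and then do all the real work on a single $k_{i}m_{i}\times m_{i}^{k_{i}}$ block: the upper bound from the $k_{i}-1$ left-kernel relations ``each $j$-block of rows sums to the all-ones vector,'' and the lower bound from an explicit independent family of $k_{i}(m_{i}-1)+1$ rows checked against well-chosen columns. That last verification is sound: the column $(m,\ldots,m)$ isolates the row $(m,1)$, and the columns with a single coordinate $c_{j}=l\le m-1$ and the rest equal to $m$ then isolate each remaining row. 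The paper instead works globally on $\mathbf{M}$: it introduces the partial-sum row vectors $\mathbf{V}\left(1',\ldots,s';j_{1'},\ldots,j_{s'};l_{1'},\ldots,l_{s'}\right)$, shows that the rows of $\mathbf{M}$ are spanned by those $\mathbf{V}$-vectors whose value-indices all lie in $\{2,\ldots,m_{i'}\}$, and counts these spanning vectors --- the product formula emerging from the same per-factor choice (omit factor $i$, or pick one of $k_{i}(m_{i}-1)$ index pairs) that your Kronecker factorization makes automatic. Your argument is the cleaner and more self-contained rank computation, and your independence check is more detailed than the paper's rather terse one-column remark. What the paper's construction buys, and yours does not, is the explicit reduced matrix $\mathbf{M}^{*}$ and the associated hierarchical vector $\mathbf{P}^{*}$, which are reused immediately afterwards in the Corollary (the equivalent full-row-rank system $\mathbf{M}^{*}\mathbf{Q}=\mathbf{P}^{*}$) and in the derivation of Fine's inequalities; so the paper's proof doubles as the construction of an object needed downstream, whereas yours establishes the rank statement alone.
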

\begin{proof}
Given any
\[
\begin{array}{c}
\left\{ 1',\ldots,s'\right\} \subset\left\{ 1,\ldots,n\right\} ,\\
\left(j_{1'},\ldots,j_{s'}\right)\in\left\{ 1,\ldots,k_{1'}\right\} \times\ldots\times\left\{ 1,\ldots,k_{s'}\right\} ,\\
\left(l_{1'},\ldots,l_{s'}\right)\in\left\{ 1,\ldots,m_{1'}\right\} \times\ldots\times\left\{ 1,\ldots,m_{s'}\right\} ,
\end{array}
\]
let $\mathbf{V}\left(1',\ldots,s';j_{1'},\ldots,j_{s'};l_{1'},\ldots,l_{s'}\right)$
denote an $\left(m_{1}\right)^{k_{1}}\ldots\left(m_{n}\right)^{k_{n}}\textnormal{-}$component
Boolean row vector whose components are lexicographically enumerated
in the same way as $\mathbf{Q}$, and such that its $\left(l_{11},\ldots,l_{1k_{1}},\ldots,l_{n1},\ldots,l_{nk_{n}}\right)$th
component is 1 if and only if 
\[
l_{1'j_{1'}}=l_{1'},\ldots,l_{s'j_{s'}}=l_{s'}.
\]
The rows of matrix $\mathbf{M}$ are $\mathbf{V}\left(1,\ldots,n;j_{1},\ldots,j_{n};l_{1},\ldots,l_{n}\right)$-vectors.
It is easy to check that for any fixed $\left(1',\ldots,s';j_{1'},\ldots,j_{s'}\right)$,
the sum of the rows of $\mathbf{M}$ corresponding to fixed values
$\left(l_{1'},\ldots,l_{s'}\right)$ is $\mathbf{V}\left(1',\ldots,s';j_{1'},\ldots,j_{s'};l_{1'},\ldots,l_{s'}\right)$.
It follows that for $s=n,n-1,\ldots,1$, a vector $\mathbf{V}\left(1',\ldots,s';j_{1'},\ldots,j_{s'};l_{1'},\ldots,l_{s'}\right)$
in which all $l_{i'}=1$ except for $i'\in\left\{ 1'',\ldots,v''\right\} \subset\left\{ 1',\ldots,s'\right\} $
(a subset of $v<s$ distinct elements), is a linear combination of
the vector 
\[
\mathbf{V}\left(1'',\ldots,v'';j_{1''},\ldots,j_{v''};l_{1''},\ldots,l_{v''}\right)
\]
and all the vectors 
\[
\mathbf{V}\left(1',\ldots,s';j_{1'},\ldots,j_{s'};l_{1'},\ldots,l_{s'}\right)
\]
 for which all $l_{i'}>1$ and 
\[
\left\{ j_{1''},\ldots,j_{v''};l_{1''},\ldots,l_{v''}\right\} \subset\left\{ j_{1'},\ldots,j_{s'};l_{1'},\ldots,l_{s'}\right\} .
\]
As a result the rows of $\mathbf{M}$ are linear combinations of the
rows of $\mathbf{M}^{*}$ consisting of vectors 
\[
\mathbf{V}\left(1',\ldots,s';j_{1'},\ldots,j_{s'};l_{1'},\ldots,l_{s'}\right)
\]
for all possible
\[
\begin{array}{c}
\left\{ 1',\ldots,s'\right\} \subset\left\{ 1,\ldots,n\right\} ,\\
\left(j_{1'},\ldots,j_{s'}\right)\in\left\{ 1,\ldots,k_{1'}\right\} \times\ldots\times\left\{ 1,\ldots,k_{s'}\right\} ,\\
\left(l_{1'},\ldots,l_{s'}\right)\in\left\{ 2,\ldots,m_{1'}\right\} \times\ldots\times\left\{ 2,\ldots,m_{s'}\right\} .
\end{array}
\]
By straightforward combinatorics the number of such vectors is 
\[
\left(k_{1}\left(m_{1}-1\right)+1\right)\ldots\left(k_{n}\left(m_{n}-1\right)+1\right).
\]
The rows of $\mathbf{M}^{*}$ are linearly independent because the
column corresponding to the $\left(l_{11}=1,\ldots,l_{1k_{1}}=1,\ldots,l_{n1}=1,\ldots,l_{nk_{n}}=1\right)$th
component of $Q$ contains a single 1, in the row of $\mathbf{M}^{*}$
corresponding to $s=0$ (which row contains 1's only).
\end{proof}
Note that 
\[
k_{i}\left(m_{i}-1\right)+1<m_{i}^{k_{i}}
\]
for all $k_{i}\ge2$ and $m_{i}\ge1$. This means that 
\[
\left(k_{1}\left(m_{1}-1\right)+1\right)\ldots\left(k_{n}\left(m_{n}-1\right)+1\right)<\left(m_{1}\right)^{k_{1}}\ldots\left(m_{n}\right)^{k_{n}},
\]
and the system $\mathbf{MQ=P}$ is always underdetermined.
\begin{cor}
If $\mathbf{P}$ satisfies marginal selectivity, then system (\ref{eq:mq=00003Dp})
is equivalent to 
\begin{equation}
\mathbf{M}^{*}\mathbf{Q}=\mathbf{P}^{*},\;\mathbf{Q}\geq0,\label{eq:m*q=00003Dp*}
\end{equation}
where $\mathbf{M}^{*}$ is as defined in the proof above, and $\mathbf{P}^{*}$
is the ``reduced hierarchical'' vector with components 
\begin{equation}
\begin{array}{r}
\Pr\left[\left(A_{1'}=a_{l_{1'}}^{1'},\ldots,A_{s'}=a_{l_{s'}}^{s'}\right)\left(x_{j_{1'}}^{1'},\ldots,x_{j_{s'}}^{s'}\right)\right]\\
=P_{1',\ldots,s'}^{*}\left(l_{1'},\ldots,l_{s'};j_{1'},\ldots,j_{s'}\right),
\end{array}\label{eq:p*'s}
\end{equation}
where $s=0,\ldots,n,$ \textup{$\left\{ 1',\ldots,s'\right\} \subset\left\{ 1,\ldots,n\right\} $},
and $l_{i'}\in$$\left\{ 2,\ldots,m_{i}\right\} $ for each $i'\in\left\{ 1',\ldots,s'\right\} $.
$\mathbf{M}^{*}$ is of full row rank.
\end{cor}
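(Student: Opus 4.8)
The plan is to produce two fixed ``transfer'' matrices relating $\mathbf{M}$ and $\mathbf{M}^{*}$ row by row in opposite directions, and then to show that, \emph{when $\mathbf{P}$ satisfies marginal selectivity}, the very same matrices relate $\mathbf{P}$ to $\mathbf{P}^{*}$; the equivalence of the two feasibility systems is then a formal consequence, the constraint $\mathbf{Q}\ge 0$ being common to both. Throughout I work, as in Theorem~\ref{thm:The-rank-of}, with the complete design $T=\alpha_{1}\times\ldots\times\alpha_{n}$, so that every sub-treatment extends to a treatment of $T$. The two row-level relations are already inside the proof of Theorem~\ref{thm:The-rank-of}. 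First, reading a row $\mathbf{V}(1,\ldots,n;j_{1},\ldots,j_{n};l_{1},\ldots,l_{n})$ of $\mathbf{M}$ as the indicator, over the atoms enumerating $\mathbf{Q}$, of the event ``$H_{x_{j_{i}}^{i}}=a_{l_{i}}^{i}$ for every listed coordinate,'' and substituting $\mathbf{1}[H_{x_{j}^{i}}=a_{1}^{i}]=1-\sum_{l=2}^{m_{i}}\mathbf{1}[H_{x_{j}^{i}}=a_{l}^{i}]$ into the product over $i$ and expanding, one gets that each row of $\mathbf{M}$ is a fixed $\{-1,0,1\}$-combination, say $\mathbf{C}$, of rows of $\mathbf{M}^{*}$ (this is the triangular relation used in that proof). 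Second, for a fixed sub-treatment $(j_{1'},\ldots,j_{s'})$ and any treatment of $T$ extending it (the choice is immaterial), summing the rows of $\mathbf{M}$ carrying that treatment over the ``free'' coordinates $l_{i}$, $i\notin\{1',\ldots,s'\}$, telescopes to the corresponding row of $\mathbf{M}^{*}$; this is the passage ``$\mathbf{M}\to\mathbf{M}^{*}$'' of the proof and it exhibits a $0$--$1$ matrix $\mathbf{S}$ with $\mathbf{M}^{*}=\mathbf{S}\mathbf{M}$ and $\mathbf{M}=\mathbf{C}\mathbf{M}^{*}$.

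Next I would transfer both relations to the data vectors \emph{using marginal selectivity}. For $\mathbf{S}$: summing the entries of $\mathbf{P}$ over the free coordinates of a fixed treatment $(j_{1},\ldots,j_{n})$ gives the marginal probability $\Pr[(A_{i'}=a_{l_{i'}}^{i'},\ i'\in\{1',\ldots,s'\})(j_{1},\ldots,j_{n})]$, which by (complete) marginal selectivity depends only on $(j_{1'},\ldots,j_{s'})$ and hence equals the corresponding component of $\mathbf{P}^{*}$ (the $s=0$ component being $\Pr[\,\varnothing$-intersection$\,]=1$); thus $\mathbf{S}\mathbf{P}=\mathbf{P}^{*}$. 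For $\mathbf{C}$: within a single treatment $\phi=(j_{1},\ldots,j_{n})$ the variables $(A_{1},\ldots,A_{n})(\phi)$ are jointly distributed, so the same inclusion--exclusion identity of the previous paragraph, now applied to the probabilities of the events $\{A_{i}=a_{l_{i}}^{i}\}$ instead of to indicators on the $\mathbf{Q}$-atoms, writes $P(l_{1},\ldots,l_{n};j_{1},\ldots,j_{n})$ as the corresponding signed sum of marginal probabilities, each of which, again by marginal selectivity, is the matching component of $\mathbf{P}^{*}$; hence $\mathbf{C}\mathbf{P}^{*}=\mathbf{P}$. Combining: $\mathbf{M}\mathbf{Q}=\mathbf{P}$ gives $\mathbf{M}^{*}\mathbf{Q}=\mathbf{S}\mathbf{M}\mathbf{Q}=\mathbf{S}\mathbf{P}=\mathbf{P}^{*}$, and $\mathbf{M}^{*}\mathbf{Q}=\mathbf{P}^{*}$ gives $\mathbf{M}\mathbf{Q}=\mathbf{C}\mathbf{M}^{*}\mathbf{Q}=\mathbf{C}\mathbf{P}^{*}=\mathbf{P}$; since $\mathbf{Q}\ge 0$ is unchanged, (\ref{eq:mq=00003Dp}) and (\ref{eq:m*q=00003Dp*}) have the same solution set. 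That $\mathbf{M}^{*}$ is of full row rank is already proved within Theorem~\ref{thm:The-rank-of} (the column of $\mathbf{Q}$ indexed by $l_{11}=\ldots=l_{nk_{n}}=1$ has its unique $1$ in the all-ones $s=0$ row, and peeling off the remaining rows by induction on $s$ finishes it), so I would simply cite it.

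The step I expect to be the real obstacle is $\mathbf{C}\mathbf{P}^{*}=\mathbf{P}$, i.e.\ the ``$\mathbf{M}^{*}\to\mathbf{M}$'' passage at the level of data. One cannot shortcut it by writing $\mathbf{P}=\mathbf{M}\mathbf{Q}$, since the claim must hold also when \emph{neither} system is feasible; the reconstruction of a full cell $P(l_{1},\ldots,l_{n};j_{1},\ldots,j_{n})$ from the reduced hierarchical components must therefore be argued intrinsically, from the genuine joint distribution of $(A_{1},\ldots,A_{n})$ available \emph{within each fixed treatment}, glued across treatments precisely by marginal selectivity. Making this gluing precise, and checking the combinatorial bookkeeping --- which subsets $\{1',\ldots,s'\}$ and which tuples of values $\ge 2$ enter, with which signs, so that it is \emph{literally the same} $\mathbf{C}$ that appeared in the row identity --- is where the care must go; the rest is formal linear algebra together with the facts about $\mathbf{M}$ and $\mathbf{M}^{*}$ already established in the proof of Theorem~\ref{thm:The-rank-of}.
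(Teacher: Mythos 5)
Your proof is correct and follows the same route the paper takes: it transfers the two row-level relations between $\mathbf{M}$ and $\mathbf{M}^{*}$ established in the proof of Theorem~\ref{thm:The-rank-of} to the data vectors $\mathbf{P}$ and $\mathbf{P}^{*}$ by means of marginal selectivity, and then reads off the equivalence of the two feasibility systems. The paper itself only sketches this in a brief comment following the corollary, so your explicit construction of the transfer matrices $\mathbf{S}$ and $\mathbf{C}$ --- in particular the within-treatment inclusion--exclusion argument showing $\mathbf{C}\mathbf{P}^{*}=\mathbf{P}$ without assuming feasibility --- is a faithful, more detailed rendering of the intended argument.
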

To comment on this corollary, it follows from the proof of Theorem
\ref{thm:The-rank-of} that $\mathbf{MQ}=\mathbf{P}$ never has a
solution if vector $\mathbf{P}$ violates the equality
\[
\begin{array}{l}
\sum\Pr\left[\left(A_{1}=a_{l_{1}}^{1},\ldots,A_{n}=a_{l_{n}}^{n}\right)\left(x_{j_{1}}^{1},\ldots,x_{j_{n}}^{n}\right)\right]\\
=\sum\Pr\left[\left(A_{1}=a_{l_{1}}^{1},\ldots,A_{n}=a_{l_{n}}^{n}\right)\left(x_{j'_{1}}^{1},\ldots,x_{j'_{n}}^{n}\right)\right],
\end{array}
\]
where the summation is across all values of $\left(l_{1},\ldots,l_{n}\right)$
with a fixed $\left(l_{1'},\ldots,l_{s'}\right)$. Clearly, this necessary
condition is just another way of stating marginal selectivity. Assuming
that $\mathbf{P}$ does satisfy marginal selectivity, it can be represented
by the ``reduced hierarchical'' vector $\mathbf{P}^{*}$ whose components
are marginal probabilities of all orders, with $s=0$ corresponding
to the probability 1.

\subsection{\label{sub:Examples}Examples}
\begin{example}
Let $\alpha=\left\{ 1^{\alpha},2^{\alpha}\right\} $, $\beta=\left\{ 1^{\beta},2^{\beta}\right\} $,
and the set of allowable treatments $T$ consist of all four possible
combinations of the factor points. Let $A$ and $B$ be Bernoulli
variables, $a_{1}=b_{1}=1$, $a_{2}=b_{2}=2$, distributed as shown:

\

\begin{center}%
\begin{tabular}{cc|cc|c|}
\hline 
\multicolumn{1}{|c}{$\alpha$} & $\beta$ & $A$ & $B$ & $\Pr$\tabularnewline
\hline 
\hline 
\multicolumn{1}{|c}{1} & 1 & 1 & 1 & $.140$\tabularnewline
\cline{1-2} 
 &  & 1 & 2 & $.360$\tabularnewline
 &  & 2 & 1 & $.360$\tabularnewline
 &  & 2 & 2 & $.140$\tabularnewline
\cline{3-5} 
\end{tabular}$\quad$%
\begin{tabular}{cc|cc|c|}
\hline 
\multicolumn{1}{|c}{$\alpha$} & $\beta$ & $A$ & $B$ & $\Pr$\tabularnewline
\hline 
\hline 
\multicolumn{1}{|c}{1} & 2 & 1 & 1 & $.198$\tabularnewline
\cline{1-2} 
 &  & 1 & 2 & $.302$\tabularnewline
 &  & 2 & 1 & $.302$\tabularnewline
 &  & 2 & 2 & $.198$\tabularnewline
\cline{3-5} 
\end{tabular}

\

\begin{tabular}{cc|cc|c|}
\hline 
\multicolumn{1}{|c}{$\alpha$} & $\beta$ & $A$ & $B$ & $\Pr$\tabularnewline
\hline 
\hline 
\multicolumn{1}{|c}{2} & 1 & 1 & 1 & $.189$\tabularnewline
\cline{1-2} 
 &  & 1 & 2 & $.311$\tabularnewline
 &  & 2 & 1 & $.311$\tabularnewline
 &  & 2 & 2 & $.189$\tabularnewline
\cline{3-5} 
\end{tabular}$\quad$%
\begin{tabular}{cc|cc|c|}
\hline 
\multicolumn{1}{|c}{$\alpha$} & $\beta$ & $A$ & $B$ & $\Pr$\tabularnewline
\hline 
\hline 
\multicolumn{1}{|c}{2} & 2 & 1 & 1 & $.460$\tabularnewline
\cline{1-2} 
 &  & 1 & 2 & $.040$\tabularnewline
 &  & 2 & 1 & $.040$\tabularnewline
 &  & 2 & 2 & $.460$\tabularnewline
\cline{3-5} 
\end{tabular}\end{center}

\

\protect{\noindent}Marginal selectivity here is satisfied trivially:
all marginal probabilities are equal 0.5, for all treatments. In the
matrix form of the LFT, the column-vector of the above 16 probabilities,
\[
(.140,.360,.360,\ldots,.040,.040,.460)^{\top},
\]
using $\top$ for transposition, is denoted by $\mathbf{P}$. The
LFT problem is defined by the system $\mathbf{M}\mathbf{Q=\mathbf{P}},$
$\mathbf{Q}\ge0,$ where the $16\times16$ Boolean matrix $\mathbf{M}$
is shown below: each column of the matrix corresponds to a combination
of values for the hypothetical $H\textnormal{-}$variables (shown
above the matrix), while each row corresponds to a combination of
a treatment with values of the outputs $A,B$ (shown on the left).

\[%
\begin{tabular}{cc|cc|c|cccccccccccccccc|}
\cline{4-21} 
 & \multicolumn{1}{c}{} & \multicolumn{1}{c|}{} & \multicolumn{2}{c|}{$H_{1^{\alpha}}$} & 1 & 1 & 1 & 1 & 1 & 1 & 1 & 1 & 2 & 2 & 2 & 2 & 2 & 2 & 2 & 2\tabularnewline
 & \multicolumn{1}{c}{} & \multicolumn{1}{c|}{} & \multicolumn{2}{c|}{$H_{2^{\alpha}}$} & 1 & 1 & 1 & 1 & 2 & 2 & 2 & 2 & 1 & 1 & 1 & 1 & 2 & 2 & 2 & 2\tabularnewline
 & \multicolumn{1}{c}{} & \multicolumn{1}{c|}{} & \multicolumn{2}{c|}{$H_{1^{\beta}}$} & 1 & 1 & 2 & 2 & 1 & 1 & 2 & 2 & 1 & 1 & 2 & 2 & 1 & 1 & 2 & 2\tabularnewline
 & \multicolumn{1}{c}{} & \multicolumn{1}{c|}{} & \multicolumn{2}{c|}{$H_{2^{\beta}}$} & 1 & 2 & 1 & 2 & 1 & 2 & 1 & 2 & 1 & 2 & 1 & 2 & 1 & 2 & 1 & 2\tabularnewline
\hline 
\multicolumn{1}{|c}{$\alpha$} & $\beta$ & $A$ & $B$ & \multicolumn{1}{c}{} &  &  &  &  &  &  &  &  &  &  &  &  &  &  &  & \multicolumn{1}{c}{}\tabularnewline
\cline{1-4} \cline{6-21} 
\multicolumn{1}{|c}{1} & 1 & 1 & 1 &  & 1 & 1 & $0$ & $0$ & 1 & 1 & $0$ & $0$ & $0$ & $0$ & $0$ & $0$ & $0$ & $0$ & $0$ & $0$\tabularnewline
\cline{1-2} 
 &  & 1 & 2 &  & $0$ & $0$ & 1 & 1 & $0$ & $0$ & 1 & 1 & $0$ & $0$ & $0$ & $0$ & $0$ & $0$ & $0$ & $0$\tabularnewline
 &  & 2 & 1 &  & $0$ & $0$ & $0$ & $0$ & $0$ & $0$ & $0$ & $0$ & 1 & 1 & $0$ & $0$ & 1 & 1 & $0$ & $0$\tabularnewline
 &  & 2 & 2 &  & $0$ & $0$ & $0$ & $0$ & $0$ & $0$ & $0$ & $0$ & $0$ & $0$ & 1 & 1 & $0$ & $0$ & 1 & 1\tabularnewline
\cline{1-4} 
\multicolumn{1}{|c}{1} & 2 & 1 & 1 &  & 1 & $0$ & 1 & $0$ & 1 & $0$ & 1 & $0$ & $0$ & $0$ & $0$ & $0$ & $0$ & $0$ & $0$ & $0$\tabularnewline
\cline{1-2} 
 &  & 1 & 2 &  & $0$ & 1 & $0$ & 1 & $0$ & 1 & $0$ & 1 & $0$ & $0$ & $0$ & $0$ & $0$ & $0$ & $0$ & $0$\tabularnewline
 &  & 2 & 1 &  & $0$ & $0$ & $0$ & $0$ & $0$ & $0$ & $0$ & $0$ & 1 & $0$ & 1 & $0$ & 1 & $0$ & 1 & $0$\tabularnewline
 &  & 2 & 2 &  & $0$ & $0$ & $0$ & $0$ & $0$ & $0$ & $0$ & $0$ & $0$ & 1 & $0$ & 1 & $0$ & 1 & $0$ & 1\tabularnewline
\cline{1-4} 
\multicolumn{1}{|c}{2} & 1 & 1 & 1 &  & 1 & 1 & $0$ & $0$ & $0$ & $0$ & $0$ & $0$ & 1 & 1 & $0$ & $0$ & $0$ & $0$ & $0$ & $0$\tabularnewline
\cline{1-2} 
 &  & 1 & 2 &  & $0$ & $0$ & 1 & 1 & $0$ & $0$ & $0$ & $0$ & $0$ & $0$ & 1 & 1 & $0$ & $0$ & $0$ & $0$\tabularnewline
 &  & 2 & 1 &  & $0$ & $0$ & $0$ & $0$ & 1 & 1 & $0$ & $0$ & $0$ & $0$ & $0$ & $0$ & 1 & 1 & $0$ & $0$\tabularnewline
 &  & 2 & 2 &  & $0$ & $0$ & $0$ & $0$ & $0$ & $0$ & 1 & 1 & $0$ & $0$ & $0$ & $0$ & $0$ & $0$ & 1 & 1\tabularnewline
\cline{1-4} 
\multicolumn{1}{|c}{2} & 2 & 1 & 1 &  & 1 & $0$ & 1 & $0$ & $0$ & $0$ & $0$ & $0$ & 1 & $0$ & 1 & $0$ & $0$ & $0$ & $0$ & $0$\tabularnewline
\cline{1-2} 
 &  & 1 & 2 &  & $0$ & 1 & $0$ & 1 & $0$ & $0$ & $0$ & $0$ & $0$ & 1 & $0$ & 1 & $0$ & $0$ & $0$ & $0$\tabularnewline
 &  & 2 & 1 &  & $0$ & $0$ & $0$ & $0$ & 1 & $0$ & 1 & $0$ & $0$ & $0$ & $0$ & $0$ & 1 & $0$ & 1 & $0$\tabularnewline
 &  & 2 & 2 &  & $0$ & $0$ & $0$ & $0$ & $0$ & 1 & $0$ & 1 & $0$ & $0$ & $0$ & $0$ & $0$ & 1 & $0$ & 1\tabularnewline
\cline{3-4} \cline{6-21} 
\end{tabular} \]

The linear programing routine of Mathematica\texttrademark (using
the interior point algorithm) shows that the linear equations (\ref{eq:linear equations})
have nonnegative solutions corresponding to the JDC-vector

\

\protect{\noindent}%
\begin{tabular}{|cccc|c|}
\hline 
$H_{1^{\alpha}}$ & $H_{2^{\alpha}}$ & $H_{1^{\beta}}$ & $H_{2^{\beta}}$ & $\Pr$\tabularnewline
\hline 
\hline 
1 & 1 & 1 & 1 & $.02708610$\tabularnewline
1 & 1 & 1 & 2 & $.00239295$\tabularnewline
1 & 1 & 2 & 1 & $.16689300$\tabularnewline
1 & 1 & 2 & 2 & $.03358610$\tabularnewline
1 & 2 & 1 & 1 & $.00197965$\tabularnewline
1 & 2 & 1 & 2 & $.10854100$\tabularnewline
1 & 2 & 2 & 1 & $.00204128$\tabularnewline
1 & 2 & 2 & 2 & $.15748000$\tabularnewline
\hline 
\end{tabular}$\quad$%
\begin{tabular}{|cccc|c|}
\hline 
$H_{1^{\alpha}}$ & $H_{2^{\alpha}}$ & $H_{1^{\beta}}$ & $H_{2^{\beta}}$ & $\Pr$\tabularnewline
\hline 
\hline 
2 & 1 & 1 & 1 & $.15748000$\tabularnewline
2 & 1 & 1 & 2 & $.00204128$\tabularnewline
2 & 1 & 2 & 1 & $.10854100$\tabularnewline
2 & 1 & 2 & 2 & $.00197965$\tabularnewline
2 & 2 & 1 & 1 & $.03358610$\tabularnewline
2 & 2 & 1 & 2 & $.16689300$\tabularnewline
2 & 2 & 2 & 1 & $.00239295$\tabularnewline
2 & 2 & 2 & 2 & $.02708610$\tabularnewline
\hline 
\end{tabular}

\

\protect{\noindent}The column-vector of these probabilities constitutes
$\mathbf{Q}>0$. This proves that in this case we do have $(A,B)\looparrowleft(\alpha,\beta)$.\qed
\end{example}

\begin{example}
In the previous example, let us change the distributions of $(A,B)$
to the following:

\

\begin{center}%
\begin{tabular}{cc|cc|c|}
\hline 
\multicolumn{1}{|c}{$\alpha$} & $\beta$ & $A$ & $B$ & $\Pr$\tabularnewline
\hline 
\hline 
\multicolumn{1}{|c}{1} & 1 & 1 & 1 & $.450$\tabularnewline
\cline{1-2} 
 &  & 1 & 2 & $.050$\tabularnewline
 &  & 2 & 1 & $.050$\tabularnewline
 &  & 2 & 2 & $.450$\tabularnewline
\cline{3-5} 
\end{tabular}$\quad$%
\begin{tabular}{cc|cc|c|}
\hline 
\multicolumn{1}{|c}{$\alpha$} & $\beta$ & $A$ & $B$ & $\Pr$\tabularnewline
\hline 
\hline 
\multicolumn{1}{|c}{1} & 2 & 1 & 1 & $.105$\tabularnewline
\cline{1-2} 
 &  & 1 & 2 & $.395$\tabularnewline
 &  & 2 & 1 & $.395$\tabularnewline
 &  & 2 & 2 & $.105$\tabularnewline
\cline{3-5} 
\end{tabular}

\

\begin{tabular}{cc|cc|c|}
\hline 
\multicolumn{1}{|c}{$\alpha$} & $\beta$ & $A$ & $B$ & $\Pr$\tabularnewline
\hline 
\hline 
\multicolumn{1}{|c}{2} & 1 & 1 & 1 & $.170$\tabularnewline
\cline{1-2} 
 &  & 1 & 2 & $.330$\tabularnewline
 &  & 2 & 1 & $.330$\tabularnewline
 &  & 2 & 2 & $.170$\tabularnewline
\cline{3-5} 
\end{tabular}$\quad$%
\begin{tabular}{cc|cc|c|}
\hline 
\multicolumn{1}{|c}{$\alpha$} & $\beta$ & $A$ & $B$ & $\Pr$\tabularnewline
\hline 
\hline 
\multicolumn{1}{|c}{2} & 2 & 1 & 1 & $.110$\tabularnewline
\cline{1-2} 
 &  & 1 & 2 & $.390$\tabularnewline
 &  & 2 & 1 & $.390$\tabularnewline
 &  & 2 & 2 & $.110$\tabularnewline
\cline{3-5} 
\end{tabular}\end{center}

\

\protect{\noindent}Once again, marginal selectivity is satisfied
trivially, as all marginal probabilities are 0.5, for all treatments.
The linear programing routine of Mathematica\texttrademark, however,
shows that the linear equations (\ref{eq:linear equations}) have
no nonnegative solutions. This excludes the existence of a JDC-vector
for this situations, ruling out thereby the possibility of $(A,B)\looparrowleft(\alpha,\beta)$.\qed
\end{example}

\subsection{Renaming and grouping}

Since LFT is both a necessary and sufficient condition for selective
influences, if it is passed for $(A_{1},\ldots,A_{n})(\phi)$, it
is guaranteed to be passed following any factor-point-specific transformations
of these random outputs. All such transformations in the case of discrete
random variables can be described as combinations of renaming (factor-point
specific one) and coarsening (grouping of some values together). In
fact, the outcome of LFT simply does not depend on the values of the
random variables involved, only their probabilities matter. Therefore
a renaming will not change anything in the system of linear equations
and inequalities (\ref{eq:nonegativity})-(\ref{eq:linear equations}).
An example of coarsening will be redefining $A$ and $B$, each having
possible values $1,2,3,4$, into binary variables
\[
A^{*}\left(\phi\right)=\begin{cases}
1 & \textnormal{if }A\left(\phi\right)=1,2,\\
2 & \textnormal{if }A\left(\phi\right)=3,4,
\end{cases}\quad B^{*}\left(\phi\right)=\begin{cases}
1 & \textnormal{if }B\left(\phi\right)=1,2,3,\\
2 & \textnormal{if }B\left(\phi\right)=4.
\end{cases}
\]
It is clear that any such a redefinition amounts to replacing some
of the equations in (\ref{eq:linear equations}) with their sums.
Therefore, if the original system has a solution, so will also the
system after such replacements. Of course, the reverse is not generally
true: the coarser system can have solutions when the original system
does not.

The same is true for coarsening the system by grouping together some
of the factor points within factors. Suppose we want to group together
points $x_{1}^{1}$ and $x_{2}^{1}$ of factor $\alpha_{1}$ containing
more than two points. This means that the probabilities $P\left(l_{1},l_{2},\ldots,l_{n};j_{1},j_{2},\ldots,j_{n}\right)$
are redefined as%
\footnote{More general mixtures, $\pi P\left(l_{1},l_{2},\ldots,l_{n};1,j_{2},\ldots,j_{n}\right)+\left(1-\pi\right)P\left(l_{1},l_{2},\ldots,l_{n};2,j_{2},\ldots,j_{n}\right)$
for $0<\pi\leq1$, are dealt with as easily; moreover, $\pi=1$ formally
corresponds to dropping the factor point $x_{2}^{1}$, considered
below. The values of $\pi$ other than $\nicefrac{1}{2}$ and 1 can
be useful if the grouping is done on a sample level, to reflect the
differences in sample sizes corresponding to treatments containing
$x_{1}^{1}$ and $x_{2}^{1}$. %
}
\[
\begin{array}{l}
P'\left(l_{1},l_{2},\ldots,l_{n};j_{1},j_{2},\ldots,j_{n}\right)\\
=\left\{ \begin{array}{l}
\frac{1}{2}P\left(l_{1},l_{2},\ldots,l_{n};1,j_{2},\ldots,j_{n}\right)+\frac{1}{2}P\left(l_{1},l_{2},\ldots,l_{n};2,j_{2},\ldots,j_{n}\right)\\
\textnormal{if }j_{1}=1,\\
P\left(l_{1},l_{2},\ldots,l_{n};j_{1}+1,j_{2},\ldots,j_{n}\right)\\
\textnormal{if }j_{1}>1.
\end{array}\right.
\end{array}
\]
 When we average the original equations for $P\left(l_{1},l_{2},\ldots,l_{n};1,j_{2},\ldots,j_{n}\right)$
and $P\left(l_{1},l_{2},\ldots,l_{n};2,j_{2},\ldots,j_{n}\right)$,
we get
\[
\begin{array}{r}
\sum\left\{ \begin{array}{c}
\frac{1}{2}\sum_{l_{12}}Q\left(l_{11}=l_{1},l_{12},\ldots,l_{1k_{1}},\ldots,l_{n1},\ldots,l_{nk_{n}}\right)\\
+\frac{1}{2}\sum_{l_{11}}Q\left(l_{11},l_{12}=l_{1}\ldots,l_{1k_{1}},\ldots,l_{n1},\ldots,l_{nk_{n}}\right)
\end{array}\right\} \\
=P'\left(l_{1},l_{2},\ldots,l_{n};1,j_{2},\ldots,j_{n}\right),
\end{array}
\]
where $l_{2j_{2}}=l_{2},\ldots,l_{nj_{n}}=l_{n}$ and the outer summation
is across all $l_{ij}$ except for the following values for $\left(i,j\right)$:
$\left(1,1\right)$, $\left(1,2\right)$, and $\left(i,j_{i}\right)$,
$i=2,\ldots,n$. We define a new vector $Q'$ whose dimensionality
is less than that of $Q$ by one, putting 
\[
\begin{array}{l}
Q'\left(l_{11}=l,l_{13},\ldots,l_{1k_{1}},\ldots,l_{n1},\ldots,l_{nk_{n}}\right)\\
=\frac{1}{2}\sum_{l_{12}}Q\left(l_{11}=l,l_{12},l_{13},\ldots,l_{1k_{1}},\ldots,l_{n1},\ldots,l_{nk_{n}}\right)\\
+\frac{1}{2}\sum_{l_{11}}Q\left(l_{11},l_{12}=l,l_{13},\ldots,l_{1k_{1}},\ldots,l_{n1},\ldots,l_{nk_{n}}\right),
\end{array}
\]
where $l$ has the same range as any of $l_{1j}$. (For notational
simplicity, in $Q'$ we do not re-enumerate $\left(1,3\right)$ as
$\left(1,2\right)$, $\left(1,4\right)$ as $\left(1,3\right)$, etc.,
leaving thereby $l_{12}$ undefined) 

For any point of factor $\alpha_{1}$ other than $x_{1}^{1}$ and
$x_{2}^{1}$, say, $x_{3}^{1}$, we have then
\[
\begin{array}{r}
\sum_{l_{11},l_{12}}Q\left(l_{11},l_{12},\ldots,l_{1k_{1}},\ldots,l_{n1},\ldots,l_{nk_{n}}\right)\\
=P\left(l_{1},l_{2}\ldots,l_{n};3,j_{2}\ldots,j_{n}\right),
\end{array}
\]
which can be presented as
\[
\begin{array}{r}
\sum\sum_{l}\left\{ \begin{array}{l}
\frac{1}{2}\sum_{l_{12}}Q\left(l_{11}=l,l_{12},l_{13}=l_{1},\ldots,l_{1k_{1}},\ldots,l_{n1},\ldots,l_{nk_{n}}\right)\\
+\frac{1}{2}\sum_{l_{11}}Q\left(l_{11},l_{12}=l,l_{13}=l_{1},\ldots,l_{1k_{1}},\ldots,l_{n1},\ldots,l_{nk_{n}}\right)
\end{array}\right\} \\
=P\left(l_{1},l_{2}\ldots,l_{n};3,j_{2}\ldots,j_{n}\right).
\end{array}
\]
This is equivalent to

\[
\begin{array}{r}
\sum Q'\left(l_{11},l_{13}=l_{1},\ldots,l_{1k_{1}},\ldots,l_{n1},\ldots,l_{nk_{n}}\right)\\
=P'\left(l_{1},l_{2}\ldots,l_{n};j_{1}=2,j_{2}\ldots,j_{n}\right),
\end{array}
\]
where $l_{2j_{2}}=l_{2},\ldots,l_{nj_{n}}=l_{n}$, and the summation
is across all $l_{ij}$ except for $\left(i,j\right)=\left(1,3\right)$
and $\left(i,j\right)=\left(i,j_{i}\right)$, $i=2,\ldots,n$. So
we have obtained a solution for the factor-coarsened system from a
solution for the original system. 

Dropping a point, say, $x_{2}^{1}$ is even simpler: we delete all
rows with $j_{1}=2$, and then redefine the $Q$ vector as
\[
\begin{array}{l}
Q'\left(l_{11},l_{13},\ldots,l_{1k_{1}},\ldots,l_{n1},\ldots,l_{nk_{n}}\right)\\
=\sum_{l_{12}}Q\left(l_{11},l_{12},l_{13},\ldots,l_{1k_{1}},\ldots,l_{n1},\ldots,l_{nk_{n}}\right).
\end{array}
\]

If the random variables involved have more than finite number of values
and/or the factors consist of more than finite number of factor points,
or if these numbers, though finite, are too large to handle the ensuing
linear programming problem, then LFT can still be used after the values
of the random variables and/or factors have been appropriately grouped.
LFT then becomes only a necessary condition for selective influences
(with respect to the original system of factors and random variables),
and its results will generally be different for different (non-nested)
groupings. 
\begin{example}
Consider the hypothesis $(A,B)\looparrowleft(\alpha,\beta)$ with
the factors having a finite number of factor points each, and $A$
and $B$ being response times. To use LFT, one can transform the random
variable $A$ as, say,
\[
A^{*}\left(\phi\right)=\begin{cases}
1 & \textnormal{if }A\left(\phi\right)\leq a_{1/4}\left(\phi\right),\\
2 & \textnormal{if }a_{1/4}\left(\phi\right)<A\left(\phi\right)\leq a_{1/2}\left(\phi\right),\\
3 & \textnormal{if }a_{1/2}\left(\phi\right)<A\left(\phi\right)\leq a_{3/4}\left(\phi\right),\\
4 & \textnormal{if }A\left(\phi\right)>a_{3/4}\left(\phi\right),
\end{cases}
\]
and transform $B$ as
\[
B^{*}\left(\phi\right)=\begin{cases}
1 & \textnormal{if }B\left(\phi\right)\leq b_{1/2}\left(\phi\right),\\
2 & \textnormal{if }B\left(\phi\right)>b_{1/2}\left(\phi\right),
\end{cases}
\]
where $a_{p}\left(\phi\right)$ and $b_{p}\left(\phi\right)$ designate
the $p$th quantiles of, respectively $A\left(\phi\right)$ and $B\left(\phi\right)$.
The initial hypothesis now is reformulated as $(A^{*},B^{*})\looparrowleft(\alpha,\beta)$,
with the understanding that if it is rejected then the initial hypothesis
will be rejected too (a necessary condition only). LFT will now be
applied to distributions of the form

\

\begin{center}%
\begin{tabular}{cc|cc|c|}
\hline 
\multicolumn{1}{|c}{$\alpha$} & $\beta$ & $A$ & $B$ & $\Pr$\tabularnewline
\hline 
\hline 
\multicolumn{1}{|c}{$x$} & y & 1 & 1 & $p_{11}$\tabularnewline
\cline{1-2} 
 &  & 1 & 2 & $p_{12}$\tabularnewline
 &  & $\vdots$ & $\vdots$ & $\vdots$\tabularnewline
 &  & 4 & 1 & $p_{41}$\tabularnewline
 &  & 4 & 2 & $p_{42}$\tabularnewline
\cline{3-5} 
\end{tabular}\end{center}

\

\protect{\noindent}where the marginals for $A$ are constrained to
0.25 and the marginals for $B$ to 0.5, for all treatments $\left\{ x^{\alpha},y^{\beta}\right\} $,
yielding a trivial compliance with marginal selectivity. Note that
the test may very well uphold $(A^{*},B^{*})\looparrowleft(\alpha,\beta)$
even if marginal selectivity is violated for $(A,B)(\phi)$ (e.g.,
if the quantiles $a_{p}\left(x^{\alpha},y^{\beta}\right)$ change
as a function of $y^{\beta}$). \qed
\end{example}

\subsection{\label{sub:Quantum-entanglement}Quantum entanglement}

Fine's (1982a-b) inequalities relate to the simplest EPR paradigm,
with the number of particles $n=2$, number of spin axes per particle
$k_{1}=k_{2}=2$, and the number of possible spin values per particle
$m_{1}=m_{2}=2$ (this value being the same for all spin axes chosen
for a given particle). They can be written, with reference to (\ref{eq:p's})
and (\ref{eq:p*'s}), as
\begin{align*}
-1\leq & P\left(2,2\,;\, j_{1},j_{2}\right)+P\left(2,2\,;\, j'_{1},j_{2}\right)\\
 & +P\left(2,2\,;\, j'_{1},j'_{2}\right)-P\left(2,2\,;\, j{}_{1},j'_{2}\right)\\
 & -P_{1}^{*}\left(\,2\,;\, j'_{1}\right)-P_{2}^{*}\left(\,2\,;\, j{}_{2}\right)\leq0,
\end{align*}
where $j_{1},j'_{1}\in\left\{ 1,2\right\} $, $j_{2},j'_{2}\in\left\{ 1,2\right\} $,
$j_{1}\not=j'_{1}$, $j_{2}\not=j'_{2}$. These inequalities constitute
the necessary and sufficient conditions for $\left(A_{1},A_{2}\right)\looparrowleft\left(\alpha_{1},\alpha_{2}\right)$,
with marginal selectivity assumed implicitly. Although Fine's derivation
of these inequalities is different, they can be derived as solutions
of system (\ref{eq:m*q=00003Dp*}), with $\mathbf{P}^{*}$ the 9-component
vector (using $\top$ for transposition) 
\[
\left(1,P_{1}^{*}\left(\,2\,;\,1\right),\ldots,P_{2}^{*}\left(\,2\,;\,2\right),P\left(2,2\,;\,1,1\right),\ldots,P\left(2,2\,;\,2,2\right)\right)^{\top},
\]
$\mathbf{Q}$ the 16-component vector
\[
\left(Q\left(1,1,1,1\right),\ldots,Q\left(2,2,2,2\right)\right)^{\top},
\]
and $\mathbf{M}^{*}$ the corresponding $9\times16$ Boolean matrix,

\[%
\begin{tabular}{|cc|cc|c|cccccccccccccccc|}
\cline{4-21} 
\multicolumn{1}{c}{} & \multicolumn{1}{c}{} & \multicolumn{1}{c|}{} & \multicolumn{2}{c|}{$H_{1^{\alpha}}$} & 1 & 1 & 1 & 1 & 1 & 1 & 1 & 1 & 2 & 2 & 2 & 2 & 2 & 2 & 2 & 2\tabularnewline
\multicolumn{1}{c}{} & \multicolumn{1}{c}{} & \multicolumn{1}{c|}{} & \multicolumn{2}{c|}{$H_{2^{\alpha}}$} & 1 & 1 & 1 & 1 & 2 & 2 & 2 & 2 & 1 & 1 & 1 & 1 & 2 & 2 & 2 & 2\tabularnewline
\multicolumn{1}{c}{} & \multicolumn{1}{c}{} & \multicolumn{1}{c|}{} & \multicolumn{2}{c|}{$H_{1^{\beta}}$} & 1 & 1 & 2 & 2 & 1 & 1 & 2 & 2 & 1 & 1 & 2 & 2 & 1 & 1 & 2 & 2\tabularnewline
\multicolumn{1}{c}{} & \multicolumn{1}{c}{} & \multicolumn{1}{c|}{} & \multicolumn{2}{c|}{$H_{2^{\beta}}$} & 1 & 2 & 1 & 2 & 1 & 2 & 1 & 2 & 1 & 2 & 1 & 2 & 1 & 2 & 1 & 2\tabularnewline
\hline 
$\alpha$ & $\beta$ & $A$ & $B$ & \multicolumn{1}{c}{} &  &  &  &  &  &  &  &  &  &  &  &  &  &  &  & \multicolumn{1}{c}{}\tabularnewline
\cline{1-4} \cline{6-21} 
$\cdot$ & $\cdot$ & $\cdot$ & $\cdot$ &  & 1 & 1 & 1 & 1 & 1 & 1 & 1 & 1 & 1 & 1 & 1 & 1 & 1 & 1 & 1 & 1\tabularnewline
1 & $\cdot$ & 2 & $\cdot$ &  & $0$ & $0$ & $0$ & $0$ & $0$ & $0$ & $0$ & $0$ & 1 & 1 & 1 & 1 & 1 & 1 & 1 & 1\tabularnewline
2 & $\cdot$ & 2 & \multicolumn{1}{c|}{$\cdot$} &  & $0$ & $0$ & $0$ & $0$ & 1 & 1 & 1 & 1 & $0$ & $0$ & $0$ & $0$ & 1 & 1 & 1 & 1\tabularnewline
$\cdot$ & 1 & $\cdot$ & \multicolumn{1}{c|}{2} &  & $0$ & $0$ & 1 & 1 & $0$ & $0$ & 1 & 1 & $0$ & $0$ & 1 & 1 & $0$ & $0$ & 1 & 1\tabularnewline
$\cdot$ & 2 & $\cdot$ & \multicolumn{1}{c|}{2} &  & $0$ & 1 & $0$ & 1 & $0$ & 1 & $0$ & 1 & $0$ & 1 & $0$ & 1 & $0$ & 1 & $0$ & 1\tabularnewline
1 & 1 & 2 & \multicolumn{1}{c|}{2} &  & $0$ & $0$ & $0$ & $0$ & $0$ & $0$ & $0$ & $0$ & $0$ & $0$ & 1 & 1 & $0$ & $0$ & 1 & 1\tabularnewline
1 & 2 & 2 & \multicolumn{1}{c|}{2} &  & $0$ & $0$ & $0$ & $0$ & $0$ & $0$ & $0$ & $0$ & $0$ & 1 & $0$ & 1 & $0$ & 1 & $0$ & 1\tabularnewline
2 & 1 & 2 & \multicolumn{1}{c|}{2} &  & $0$ & $0$ & $0$ & $0$ & $0$ & $0$ & 1 & 1 & $0$ & $0$ & $0$ & $0$ & $0$ & $0$ & 1 & 1\tabularnewline
2 & 2 & 2 & 2 &  & $0$ & $0$ & $0$ & $0$ & $0$ & 1 & $0$ & 1 & $0$ & $0$ & $0$ & $0$ & $0$ & 1 & $0$ & 1\tabularnewline
\cline{1-4} \cline{6-21} 
\end{tabular}\]

In fact, using a standard facet enumeration program (e.g., lrs program
at http://cgm.cs.mcgill.ca/\textasciitilde{}avis/C/lrs.html) these
inequalities (together with the equalities representing marginal selectivity)
can be derived ``mechanically.'' The essence of the computation
is in the fact that a linear system (\ref{eq:mq=00003Dp}) or (\ref{eq:m*q=00003Dp*})
is feasible if and only if the point $\mathbf{P}$ (respectively,
$\mathbf{P^{*}}$) belongs to the convex hull of the points corresponding
to the columns of $\mathbf{M}$ (respectively, $\mathbf{M}^{*}$),
which form a subset of the vertices of a unit hypercube. The facet
enumeration programs derive inequalities describing this convex hull. 

Given a set of numerical (experimentally estimated or theoretical)
probabilities, computing the LP problem (\ref{eq:mq=00003Dp}) or
(\ref{eq:m*q=00003Dp*}) is always preferable to dealing with explicit
inequalities as their number becomes very large even for moderate-size
vectors $P$. While Fine's inequalities for $n=2$, $k_{1}=k_{2}=2$,
$m_{1}=m_{2}=2$ (assuming marginal selectivity) number just 8, already
for $n=2$, $k_{1}=k_{2}=2$ with $m_{1}=m_{2}=3$ (describing, e.g.,
an EPR experiment with two spin-$1$ particles, or two spin-$\nicefrac{1}{2}$
ones and inefficient detectors), our computations yield 1080 inequalitiies,
and for $n=3$, $k_{1}=k_{2}=k_{3}=2$ and $m_{1}=m_{2}=m_{3}=2$,
corresponding to the Greenberger, Horne, \& Zeilinger (1989) paradigm
with three spin-$\nicefrac{1}{2}$ particles, this number is 53792.

The potential of JDC to lead to LFT and provide an ultimate criterion
for the entanglement problem has not been utilized in quantum physics
until relatively recently, when LFT was proposed in Werner \& Wolf
(2001a, b) and Basoalto \& Percival (2003). Prior to this, criteria
(as opposed to just necessary conditions) for the possibility of a
classical explanation for an EPR paradigm involving multiple particles,
multiple measurement settings, and multiple outcomes per measurements
were only known under strong symmetry constraints (de Barros \& Suppes,
2001; Garg, 1983; Mermin, 1990; Peres, 1999).

\subsection{Sample-level tests\label{sub:Sample-level-tests}}

\label{sub:stats}Although this paper is not concerned with statistical
questions, it may be useful to mention some of the approaches to constructing
sample-level tests based on LFT. As mentioned in Section \ref{sub:Quantum-entanglement},
the set of our vectors $\mathbf{P}$ for which the system $\mathbf{MQ}=\mathbf{P},\;\mathbf{Q}\geq0$
has a solution forms a convex polytope. In particular, if the set
$T$ of allowable treatments contains all combinations of factors
points, the polytope is the $\left(\left(k_{1}\left(m_{1}-1\right)+1\right)\ldots\left(k_{n}\left(m_{n}-1\right)+1\right)-1\right)\textnormal{-}$dimensional
convex hull of the points corresponding to the columns of the Boolean
matrix $\mathbf{M}$, which form a subset of the vertices of the $\left(m_{1}\right)^{k_{1}}\ldots\left(m_{n}\right)^{k_{n}}\textnormal{-}$dimensional
unit hypercube. Recently Davis-Stober (2009) developed a statistical
theory for testing the hypothesis that a vector of probabilities $\mathbf{P}$
(not necessarily of the same structure as in LFT) belongs to a convex
polytope $\mathcal{P}$ against the hypothesis that it does not. Under
certain regularity constraints he derived the asymptotic distribution
(a convex mixture of chi-square distributions) for the log maximum
likelihood ratio statistic
\[
-2\log\frac{\max_{\mathbf{P}\in\mathcal{P}}L\left(\mathbf{P}|\mathbf{N}\right)}{\max_{\mathbf{P}}L\left(\mathbf{P}|\mathbf{N}\right)},
\]
where $\mathbf{N}$ is the vector of observed absolute frequencies,
comprised of the numbers of occurrences of $\left(l_{1},\ldots,l_{n};j_{1},\ldots,j_{n}\right)$
in the case of LFT. The likelihoods $L\left(\mathbf{P}|\mathbf{N}\right)$
are computed using the standard theory of multinomial distributions.
This theory has been ``test-driven'' on the polytopes related to
the transitivity of preferences problem (Regenwetter, Dana, \& Davis-Stober,
2010, 2011). A Bayesian approach to the same problem is presented
in Myung, Karabatsos, \& Iverson (2005).

Other approaches readily suggest themselves. One of them is to use
the known theory of $L\left(\mathbf{P}|\mathbf{N}\right)/\max_{\mathbf{P}}L\left(\mathbf{P}|\mathbf{N}\right)$
to compute a confidence region of possible probability vectors $\mathbf{P}$
for a given empirical vector $\mathbf{N}$. The hypothesis of selective
influences is retained or rejected according as this confidence region
contains or does not contain a point $\mathbf{P}$ that passes LFT.
Resampling techniques is another obvious approach, e.g., the permutation
test in which the assignment of empirical distributions to different
treatments is randomly ``reshuffled'' so that each distribution
generally ends up assigned to a ``wrong'' treatment. If the proportion
of the permuted assignments whose deviation from the LFT polytope
does not exceed that of the the observed estimate of $\mathbf{P}$
is sufficiently small, the hypothesis of selective influences can
be considered supported. 

Little is known at present about the computational feasibility and
statistical properties of these and similar procedures. In particular
(this also applies to Davis-Stober's test), we do not know their statistical
power for different locations of the true vector of probabilities
outside the convex polytope described by $\mathbf{MQ}=\mathbf{P},\;\mathbf{Q}\geq0$.
Nor do we know how the effect size, a measure of deviation of $\mathbf{P}$
from the polytope, should be computed optimally. All of this will
have to be investigated separately.

\section{Conclusion}

Selectiveness in the influences exerted by a set of inputs upon a
set of random and stochastically interdependent outputs is a critical
feature of many psychological models, often built into the very language
of these models. We speak of an internal representation of a given
stimulus, as separate from an internal representation of another stimulus,
even if these representations are considered random entities and they
are not independent. We speak of decompositions of response time into
signal-dependent and signal-independent components, or into a perceptual
stage (influenced by stimuli) and a memory-search stage (influenced
by the number of memorized items), without necessarily assuming that
the two components or stages are stochastically independent.

In this paper, we have described the Linear Feasibility Test, an application
of the fundamental Joint Distribution Criterion for selective influences
to random variables with finite numbers of values. This test can be
performed by means of standard linear programming. Due to the fact
that any random output can be discretized, the Linear Feasibility
Test is universally applicable, although one should keep in mind that
if a diagram of selective influences is upheld by the test at some
discretization, it may be rejected at a finer or non-nested discretization
(but not at a coarser one). Both the Joint Distribution Criterion
and the Linear Feasibility Test, although new in the behavioral context,
have their direct analogues in quantum physics, in dealing with the
problem of the existence of a classical explanation (one with non-contextual,
local hidden variables) for outcomes of noncommuting measurements
performed on entangled particles. The discovery of these parallels
promises to enrich and facilitate our understanding of selective influences.

\section*{Acknowledgments}

This research has been supported by AFOSR grant FA9550-09-1-0252 to
Purdue University and by the Academy of Finland grant 121855 to University
of Jyväskylä. We are indebted to Joseph Houpt and Jerome Busemeyer
whose comments helped us to significantly improve the paper.

\section*{REFERENCES}

\setlength{\parindent}{0cm}\everypar={\hangindent=15pt}Basoalto,
R.M., \& Percival, I.C. (2003). BellTest and CHSH experiments with
more than two settings. \emph{Journal of Physics A: Mathematical \&
General}, 36, 7411\textendash{}7423.

Bell, J. (1964). On the Einstein-Podolsky-Rosen paradox. \emph{Physics},
1, 195-200.

Bohm, D., and Aharonov, Y. (1957). Discussion of Experimental Proof
for the Paradox of Einstein, Rosen and Podolski. \emph{Physical Review},
108, 1070-1076.

Clauser, J. F., Horne, M. A., Shimony, A. \& Holt, R. A. (1969). Proposed
experiment to test local hidden-variable theories.\emph{ Physical
Review Letters}, 23, 880\textendash{}884.

Davis-Stober, C. P. (2009). Analysis of multinomial models under inequality
constraints: Applications to measurement theory. \emph{Journal of
Mathematical Psychology}, 53, 1\textendash{}13.

de Barros, J. A., \& Suppes, P. (2001). Results for six detectors
in a three-particle GHZ experiment\emph{. }In J. Bricmont, D. Durr,
M. C. Galavotti, G. Ghirardi, F. Petruccione, N. Zanghi (Eds.)\emph{
Chance in Physics: Foundations and Perspectives} (pp. 213-223), Berlin:
Springer.

Dzhafarov, E.N. (1999). Conditionally selective dependence of random
variables on external factors. \emph{Journal of Mathematical Psychology},
43, 123-157.

Dzhafarov, E.N. (2003a). Selective influence through conditional independence.
\emph{Psychometrika}, 68, 7\textendash{}26.

Dzhafarov, E.N., \& Gluhovsky, I. (2006). Notes on selective influence,
probabilistic causality, and probabilistic dimensionality. \emph{Journal
of Mathematical Psychology}, 50, 390\textendash{}401.

Dzhafarov, E.N., \& Kujala, J.V. (2010). The Joint Distribution Criterion
and the Distance Tests for selective probabilistic causality. \emph{Frontiers
in Quantitative Psychology and Measurement}, 1:151 doi: 10.3389/fpsyg.2010.00151.

Einstein, A, Podolsky, B., \& Rosen, N. (1935). Can quantum-mechanical
description of physical reality be considered complete? \emph{Physical
Review}, 47, 777\textendash{}780.

Fine, A. (1982a). Joint distributions, quantum correlations, and commuting
observables. \emph{Journal of Mathematical Physics}, 23, 1306-1310.

Fine, A. (1982b). Hidden variables, joint probability, and the Bell
inequalities.\emph{ Physical Review Letters}, 48, 291-295.

Garg, A. (1983). Detector error and Einstein-Podolsky-Rosen correlations.
\emph{Physical Review D}, 28, 785-790.

Greenberger, D.M., Horne, M.A., \& Zeilinger, A. (1989). Going beyond
Bell's theorem. In M. Kafatos (Ed.\emph{) Bell\textquoteright{}s Theorem,
Quantum Theory and Conceptions of the Universe} (pp. 69\textendash{}72),
Dordrecht: Kluwer.

Karmarkar, N. (1984). A new polynomial-time algorithm for linear programming.\emph{
}Combinatorica, 4, 373-395.

Khachiyan, L. (1979). A polynomial algorithm in linear programming.
Doklady Akademii Nauk SSSR 244, 1093-1097.

Kechris, A. S. (1995). \emph{Classical Descriptive Set Theory}. New
York: Springer.

Kujala, J. V., \& Dzhafarov, E. N. (2008). Testing for selectivity
in the dependence of random variables on external factors. \emph{Journal
of Mathematical Psychology}, 52, 128\textendash{}144.

Mermin, N.D. (1990). Extreme quantum entanglement in a superposition
of macroscopically distinct states\emph{. Physical Review Letters},
65, 1838-1840.

Myung, J. I., Karabatsos, G., \& Iverson, G. J. (2005). A Bayesian
approach to testing decision making axioms. \emph{Journal of Mathematical
Psychology}, 49, 205\textendash{}225.

Peres, A. (1999). All the Bell inequalities. \emph{Foundations of
Physics}, 29, 589-614.

Regenwetter, M., Dana, J., \& Davis-Stober, C. P. (2010). Testing
transitivity of preferences on two-alternative forced choice data.
\emph{Frontiers in Quantitative Psychology and Measurement}, doi:
10.3389/fpsyg.2010.00148.

Regenwetter, M., Dana, J., Davis-Stober, C. P. (2011). Transitivity
of preferences. \emph{Psychological Review}, 118, 42-56.

Sternberg, S. (1969). The discovery of processing stages: Extensions
of Donders\textquoteright{} method. In W.G. Koster (Ed.), \emph{Attention
and Performance II. Acta Psychologica}, 30, 276\textendash{}315.

Stapp, H.P. (1975). Bell\textquoteright{}s Theorem and World Process\emph{.
}Nuovo Cimento B 29, 270-276.

Suppes, P., \& Zanotti, M. (1981). When are probabilistic explanations
possible? \emph{Synthese}, 48, 191\textendash{}199.

Thurstone, L. L. (1927). A law of comparative judgments. \emph{Psychological
Review}, 34, 273\textendash{}286.

Townsend, J. T. (1984). Uncovering mental processes with factorial
experiments. \emph{Journal of Mathematical Psychology}, 28, 363\textendash{}400. 

Townsend, J.T., \& Schweickert, R. (1989). Toward the trichotomy method
of reaction times: Laying the foundation of stochastic mental networks.
\emph{Journal of Mathematical Psychology}, 33, 309\textendash{}327.

Werner, R.F., \& Wolf, M.M. (2001). All multipartite Bell correlation
inequalities for two dichotomic observables per site. arXiv:quant-ph\slash{}0102024v1.

Werner, R.F., \& Wolf, M.M. (2001). Bell inequalities and entanglement.
arXiv:quant-ph\slash{}0107093v2.
\end{document}